\newtheorem{theorem}{Theorem}
\tikzstyle{startstop} = [rectangle, rounded corners, minimum width=4cm, minimum height=1cm, text centered, draw=black, fill=white!20]
\tikzstyle{module1} = [rectangle, minimum width=5.8cm, minimum height=1.4cm, text centered, draw=black, fill=white!15]
\tikzstyle{module2} = [rectangle, minimum width=5.8cm, minimum height=1.4cm, text centered, draw=black, fill=white!15]
\tikzstyle{module3} = [rectangle, minimum width=5.8cm, minimum height=1.4cm, text centered, draw=black, fill=white!20]
\tikzstyle{module4} = [rectangle, minimum width=5.8cm, minimum height=1.4cm, text centered, draw=black, fill=white!15]
\tikzstyle{arrow} = [thick,->,>=stealth]
\providecommand{\Yij}{\boldsymbol{Y}_{ij}}
\providecommand{\Xij}{\boldsymbol{X}_{i}}
\providecommand{\tYij}{\theta_{Y_{ij}}}
\providecommand{\tXi}{\theta_{X_i}}
\providecommand{\tVi}{\theta_{V_i}}
\providecommand{\xij}{\boldsymbol{x}_{ij}}
\providecommand{\vi}{\boldsymbol{v}_{i}}
\providecommand{\epsYij}{\boldsymbol{\varepsilon}_{Y_{ij}}}
\providecommand{\epsXi}{\boldsymbol{\varepsilon}_{X_{i}}}
\providecommand{\rY}{r_{Y_{ij}}}
\providecommand{\rX}{r_{X_{i}}}
\providecommand{\xtilde}{\widetilde{\boldsymbol{x}}_{ij}}
\providecommand{\vtilde}{\widetilde{\boldsymbol{v}}_{i}}
\providecommand{\BX}{\textbf{B}^{\top} \xtilde}
\providecommand{\bi}{\boldsymbol{b}_i}
\providecommand{\YijStarStar}{\begin{pmatrix} Y_{1ij}^{*} \\ Y_{2ij}^{*} \end{pmatrix} = \rY \begin{pmatrix} \cos{(\tYij^{*})} \\ \sin{(\tYij^{*})} \end{pmatrix}}
\providecommand{\XiStarStar}{\begin{pmatrix} X_{1i}^{*} \\ X_{2i}^{*} \end{pmatrix} = \rX \begin{pmatrix} \cos{(\tXi^{*})} \\ \sin{(\tXi^{*})} \end{pmatrix}}
\providecommand{\bet}{\boldsymbol{\beta}}
\providecommand{\alp}{\boldsymbol{\alpha}}
\providecommand{\delXone}{-\delta_{X}}
\providecommand{\delXtwo}{\delta_{X}}
\providecommand{\delYone}{-\delta_{Y}}
\providecommand{\delYtwo}{\delta_{Y}}
\providecommand{\wij}{\boldsymbol{w}_{ij}}
\providecommand{\BY}{\boldsymbol{Y}}
\providecommand{\BW}{\boldsymbol{W}}
\providecommand{\BMu}{\boldsymbol{\mu}}
\providecommand{\BSig}{\boldsymbol{\Sigma}}
\providecommand{\BI}{\boldsymbol{I}}
\providecommand{\Gb}{\boldsymbol{\Gamma}_b}
\providecommand{\BG}{\mathbf{G}}
\providecommand{\BH}{\mathbf{H}}
\providecommand{\wmu}{\boldsymbol{w}^{\top} \boldsymbol{\mu}}
\providecommand{\Sb}{\boldsymbol{\Sigma}_b}
\providecommand{\PsiD}{\Psi_{\mathcal{D}}}
\DeclareMathOperator{\atantwo}{atan2}
\newcommand{\boldemdashbullet}{\mathbin{\textbf{\textemdash}\!\bullet\!\textbf{\textemdash}}}
\begin{document}
	
\title{\bf Modeling Zero-Inflated Longitudinal Circular Data Using Bayesian Methods: Application to Ophthalmology}
\author{Prajamitra Bhuyan$^\ddag$, Soutik Halder$^\dag$, Jayant Jha$^\dag$  \\ 
{\small $^\ddag$Operations Management Group, Indian Institute of Management Calcutta, Kolkata, India} \\
{\small $^\dag$Interdisciplinary Statistical Research Unit, Indian Statistical Institute, Kolkata, India}
}
\date{}
\maketitle
	
\begin{abstract}
	{This paper introduces the modeling of circular data with excess zeros under a longitudinal framework, where the response is a circular variable and the covariates can be both linear and circular in nature. In the literature, various circular-circular and circular-linear regression models have been studied and applied to different real-world problems. However, there are no models for addressing zero-inflated circular observations in the context of longitudinal studies. Motivated by a real case study, a mixed-effects two-stage model based on the projected normal distribution is proposed to handle such issues. The interpretation of the model parameters is discussed and identifiability conditions are derived. A Bayesian methodology based on Gibbs sampling technique is developed for estimating the associated model parameters. Simulation results show that the proposed method outperforms its competitors in various situations. A real dataset on post-operative astigmatism is analyzed to demonstrate the practical implementation of the proposed methodology. The use of the proposed method facilitates effective decision-making for treatment choices and in the follow-up phases.} 
    \\
    
\textbf{Keywords:} {Astigmatism, Data augmentation, Gibbs sampling, Identifiability, Projected normal distribution, Zero-inflation.}
\end{abstract}

\section{Introduction} \label{SEC:LCRM:Introduction}
In the clinical ophthalmic practice, astigmatism is among the most prevalent refractive disorders, which is caused by a deformed corneal surface or an irregular shape of the lens inside the eye \citep{Read-Collins-Carney2007}. For a healthy person, the cornea at the front of an eye has an evenly round shape that helps to concentrate the rays sharply onto the retina for clear vision. Due to astigmatism, light rays enter the anterior portion of an eye with improper refraction and have multiple focal points on the retina, affecting the quality of vision \citep{Wolffsohn-Bhogal-Shah2011}. As a result, both distant and close objects appear blurry, distorted, or fuzzy; for example, it can be difficult for an astigmatic person to read a road sign or to perform a computer task comfortably. Figure \ref{FIG:LCRM:Focus-Light-Rays} illustrates the differences between an astigmatic and healthy eye. According to the axis of astigmatism, it is often categorized as (i) with-the-rule (WTR) astigmatism, (ii) against-the-rule (ATR) astigmatism and (iii) oblique astigmatism \citep{Remon-Monsoriu-Furlan-2017}. In WTR astigmatism, the eye perceives vertical lines ($90^\circ$) more clearly than horizontal lines ($180^\circ$), but the situation is reversed under ATR astigmatism. However, oblique astigmatism arises when the steeper meridian falls within $(30^\circ, 60^\circ)$ or $(120^\circ, 150^\circ)$. Compared to WTR or ATR astigmatism, oblique astigmatism is more adverse, as it distorts the perception of horizontal and vertical objects, like letters or numbers. Figure \ref{FIG:LCRM:WTR-ATR-Oblique} displays the visual distortions due to different types of astigmatism. To make only one preferred direction, the observed angles are multiplied by 4 and then transformed by taking mod $360^\circ$.

\begin{figure}
  \centering
  \begin{minipage}{0.48\textwidth}
      \centering
      \includegraphics[width=\linewidth]{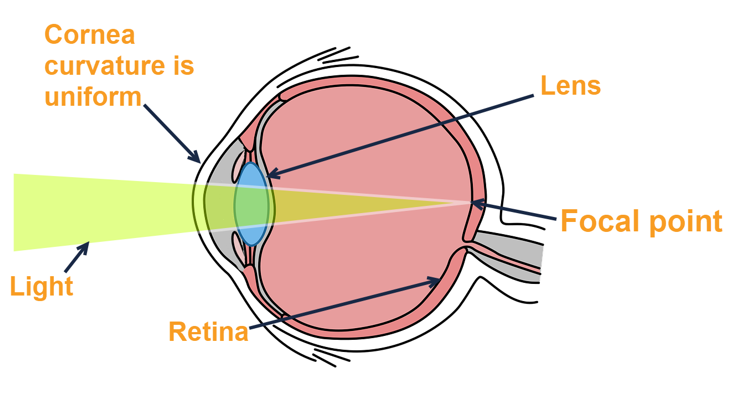}
  \end{minipage}
  \begin{minipage}{0.48\textwidth}
      \centering
      \includegraphics[width=\linewidth]{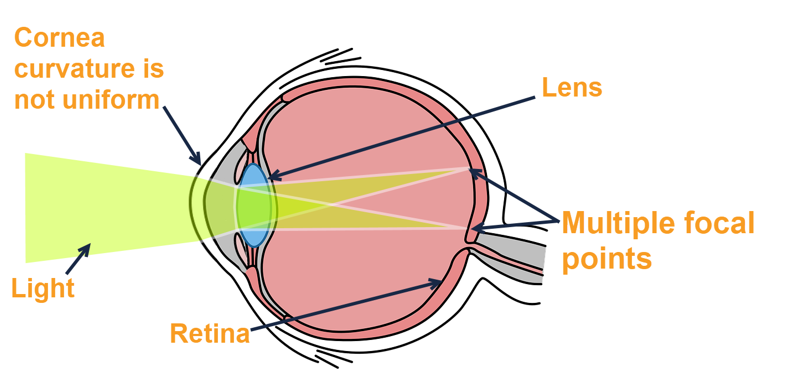}
  \end{minipage}
  \caption{Focus of light rays on the retina in a normal eye (left) and an astigmatic eye (right).}
  \label{FIG:LCRM:Focus-Light-Rays}
\end{figure}

\begin{figure}
    \centering
    \begin{subfigure}{0.85\linewidth}
        \centering
        \includegraphics[width=0.8\textwidth]{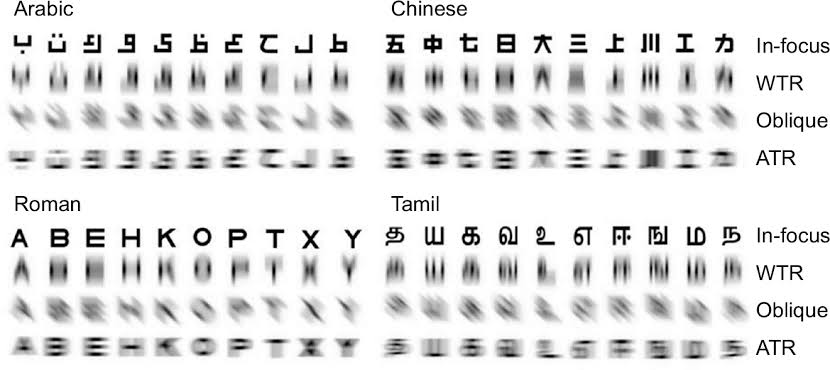}
    \end{subfigure}
    \caption{Visual distortions under WTR, ATR and Oblique astigmatism.}
    \label{FIG:LCRM:WTR-ATR-Oblique}
\end{figure}

Numerous research articles have described different causes and risk factors for astigmatism, such as hereditary, eye injuries, glaucoma, eye surgeries, etc. \citep{Shingetal2025}. According to \cite{Hashemietal2018}, the prevalence of corneal astigmatism ranges from 34.3\% to 46.6\% in adults across the World Health Organization regions and it is generally higher in the older population. Among various causes, one of the prominent reasons for getting corneal astigmatism is cataract surgery \citep{Kim-Whang-Joo2016}. The main purpose of cataract surgery is to remove the opaque natural lens from the eye and implant a new artificial intraocular lens to restore vision. Two popular methods in the field of cataract extraction treatment are Small-Incision Cataract Surgery (SICS) \citep{Singh-Misbah-Saluja-Singh2017} and PhacoEmulsification Cataract Surgery (PECS) \citep{Kelman1967}. In the former technique, a topical anesthetic is applied with a tiny self-sealing corneal incision and the cataract core is removed by creating a circular hole in the lens capsule. The latter one is an advanced technique, where the cataractous lens is emulsified with ultrasonic vibration and extracted through a microscopic incision. In underdeveloped and developing countries, the use of PECS is limited in contrast to the commonly used technique SICS due to price differences.

The statistical discipline that deals with the study of angular observations is commonly referred to as ``circular statistics" or ``directional statistics" \citep{JammalamadakaSengupta2001, MardiaJupp2009}. In contrast to the Euclidean (linear) random variables, circular variables are treated differently due to the difference in topology. Therefore, specialized statistical models are required for the analysis of circular observations. These have wide applications in many scientific disciplines, such as crystallography \citep{Mackenzie1957}, movement ecology \citep{Fisher-Lee1992}, meteorology \citep{BhattacharyaSengupta2009} and genomics \citep{SenguptaKim2016}, etc. \cite{SarmaJammalamadaka1993} suggested a general circular-circular regression model, in which the sine and cosine function of a circular response variable is regressed on a circular covariate using the trigonometric polynomial function. In contrast, \cite{Rivest1997} proposed a model for forecasting circular response by rotating the decentred circular covariate with an application to predict ground movement direction during earthquakes. With one circular response and one circular covariate, \cite{DownsMardia2002} introduced a novel circular regression approach, which was later reparameterized by \cite{Kato-Shimizu-Shieh2008}. In most applications, circular response is regressed on either a circular or linear covariate, while some articles additionally take into account multiple linear covariates and some consider only multiple circular covariates; see \cite{Fisher-Lee1992} and \cite{Jha-Biswas2017} for more details. However, various applications involve repeated measurements on circular variables. For example, in order to study the direction mechanism of birds, it is required to analyze the angular differences between the position of a bird at successive times after release \citep{Artes-Paula-Ranvaud2000}. \cite{D'EliaBorgioliScapini2001} proposed a longitudinal circular model with only linear covariates for studying the movement of small animals. A random-effects model was suggested by \cite{Rivest-Kato2019} for a very restricted setup where every covariate has a circular and a linear component. As per our knowledge, there are limited studies that regress a longitudinal circular response on circular as well as linear regressors. See \cite{Antonio-Pena2014}, \cite{Maruotti2016} for more details. However, none of these longitudinal studies consider zero-inflation in the observed circular variables, which may arise in various real-life scenarios. As discussed in \cite{Jha-Bhuyan2021} and \cite{Jha-Biswas2018}, the conventional models, which don't account for zero-inflation, induce bias in the estimates. Moreover, the zero-inflation in circular observations make the estimation computationally challenging due to intractable likelihood function. In addition, the longitudinal nature of the response variable may impose identifiability issues. We attempt to address these challenges motivated by a case study on patients affected with post-operative astigmatism.

\subsection{Challenges in analyzing astigmatism data} \label{SEC:LCRM:Challenges-Circular-Regression}
This paper considers a real case study on cataract surgery patients, treated with either SICS or PECS, observed over a period of two years (2008–2010). In this study, longitudinal measurements on the axis of astigmatism for the patients are recorded, along with their demographic profiles. See \cite{Bakshi2010} for more details about the dataset. In this paper, our primary interest is to compare the surgical procedures and the corresponding visual recovery from astigmatism after the surgery.

In this study, the measurements were taken up to the precision of $1^{\circ}$. As discussed before, the observations are multiplied by 4 and then transformed by taking mod $360^{\circ}$ to make only one preferred direction. As a result, this transformed dataset contains many zeros because of censoring of measurements on astigmatism within the range of ($-2^{\circ}, 2^{\circ}$). In the context of linear variables, some existing zero-inflated linear models are available in the literature \citep{Ghosh-Mukhopadhyay-Lu2006}. Since the topology of a circle differs from that of a line, $0^{\circ}$ cannot be regarded as a boundary point in the sample space of a circular variable. Without loss of generality, one can fix the origin of a circular variable which makes no difference in the interpretation of whether a circular variable is zero-inflated or inflated at any other angle. The zero-inflated circular modeling has been proposed by \cite{Jha-Biswas2018} and \cite{Jha-Bhuyan2021}. However, these models are not capable in handling longitudinal response and linear covariates. 
 
To address the aforementioned challenges, we introduce a mixed-effects circular regression model under a two-stage setup. The proposed model characterizes the conditional distribution of the longitudinal circular response using the projected normal (PN) distribution and its augmented density representation. This approach enables a flexible and tractable framework for modeling complex circular data, particularly accommodating various types of covariates as well as longitudinal dependence. We provide a comprehensive interpretation of the model parameters and resolve the identifiability issues arising from longitudinal dependence. The proposed model handles zero-inflation by considering a censoring mechanism motivated by the real case study under consideration. To estimate the model parameters, we develop an efficient Gibbs sampling algorithm, incorporating the identifiability conditions, within a Bayesian paradigm. We further provide the generalization of the model as well as the inferential methodology when zero-inflation occurs due to both censoring and randomness. Extensive simulations demonstrate consistent performance of the proposed model even with increasing proportions of zeros in the circular response as well as covariate. Additionally, the results from model comparison exhibit superiority of the proposed method, confirming its effectiveness and relevance in practice. A detailed sensitivity analysis further highlights the robustness and reliability of our approach. Our analysis of the astigmatism data provides assistance in comparing efficacy of the different surgical procedures. It also identifies the significant factors affecting post-surgery recovery. Finally, the results from this study provide meaningful insights and a basis for quantifying the uncertainty involved during the recovery process.

We first discuss the PN distribution for circular data in Section \ref{SEC:LCRM:Preliminaries} and describe our proposed model and methodology in Section \ref{SEC:LCRM:Proposed-Model}. In Section \ref{SEC:LCRM:Identifiability-Issues}, different identifiability issues associated with the proposed model are discussed. The estimation procedure  under a Bayesian framework is proposed in Section \ref{SEC:LCRM:Inferencial-Methodology} and a generalization is provided in Section \ref{SEC:LCRM:Some-Generalizations}. The performance of the proposed model is studied through simulation experiments in Section \ref{SEC:LCRM:Simulations}. The proposed method is applied to analyze the post-operative astigmatism data in Section \ref{SEC:LCRM:Analysis-Astigmatism-Data} and some concluding remarks are provided in Section \ref{SEC:LCRM:Discussions}.

\section{Preliminaries} \label{SEC:LCRM:Preliminaries}
A popular approach to obtain a distribution for a circular random variable is to radially project a distribution defined on $\mathbb{R}^2$ onto the unit circle. Let a random vector $\BY = (Y_1, Y_2)^\top \in \mathbb{R}^2$ follows a bivariate normal distribution, denoted by $\mathcal{N}_2(\BMu, \BSig)$, with mean vector $\BMu \in \mathbb{R}^2$ and covariance matrix $\BSig$. Then the unit vector $\BW = \frac{\BY}{R}$ follows a PN distribution, denoted by $\mathcal{PN}_2(\BMu, \BSig)$, where $R = \Vert \BY \Vert$ and $\Vert . \Vert$ denotes the Euclidean norm. Note that $\BW$ can be expressed in terms of an angular coordinate $\theta$ as $\BW = (\cos{\theta}, \sin{\theta})^{\top}$, and hence, $Y_1 = R \cos{\theta}$ and $Y_2 = R \sin{\theta}$, for $\theta \in (-\pi, \pi]$.	Further, one can obtain the joint density function of $R$ and $\theta$ as
\begin{align} \label{EQN:LCRM:Joint-Density-R,theta}
    f_{R,\theta}(r, \theta | \BMu, \BSig) = \frac{r}{2\pi \sqrt{\vert \BSig \vert}} \exp\left\{ -\frac{1}{2} (r\boldsymbol{w} - \BMu)^{\top} \BSig^{-1} (r\boldsymbol{w} - \BMu) \right\}, \quad r > 0,\ -\pi < \theta \leq \pi,
\end{align}
where $\boldsymbol{w} = (\cos{\theta}, \sin{\theta})^{\top}$. Note that $R$ is not observable and the only observed variable is $\theta$. By integrating out $R$ from (\ref{EQN:LCRM:Joint-Density-R,theta}), we can obtain the marginal density function of $\theta$, given by
\begin{align} \label{EQN:LCRM:PN-Density}
	f_{PN}(\theta | \BMu, \BSig) = \frac{1}{2\pi} |\BSig|^{-\frac{1}{2}} A_3^{-1} \exp\left\{ -\frac{1}{2} A_1 \right\} \left[ 1 + \frac{ \left( A_2 A_3^{-\frac{1}{2}} \right) \Phi \left( A_2 A_3^{-\frac{1}{2}} \right)}{\phi \left( A_2 A_3^{-\frac{1}{2}} \right)} \right],\ -\pi < \theta \leq \pi,
\end{align}
where $A_1 = \BMu^{\top} \BSig^{-1} \BMu$, $A_2 = \BMu^{\top} \BSig^{-1} \boldsymbol{w}$, $A_3 = \boldsymbol{w}^{\top} \BSig^{-1} \boldsymbol{w}$, and $\Phi(\cdot)$ and $\phi(\cdot)$ denote the cumulative distribution function and the probability density function of the standard normal distribution, respectively.
	
The mean direction of $\theta$ is defined as a unit vector $\frac{E(\BW)}{\Vert E(\BW) \Vert}$, when the mean resultant length $\Vert E(\boldsymbol{W}) \Vert\ \in (0,1]$. Note that $\Vert E(\BW) \Vert = 0$ if and only if $\BMu = \boldsymbol{0}$ and $\BSig = \sigma^2 \BI_2$ (where $\BI_2$ denotes the identity matrix of order $2$), which reduces the distribution of $\theta$ to the circular uniform distribution with undefined mean direction. When $\BSig = \BI_2$, the distribution of $\theta$ is unimodal and rotationally symmetric about its mean direction, and as $\Vert \BMu \Vert$ increases, the concentration of the density increases. In general, the PN distribution with $\BSig \neq \BI_2$ can be asymmetric and possibly bimodal \citep{WangGelfand2013}.

\subsection{Projected normal regression model} \label{SEC:LCRM:Regression-Model}
The regression model of $\theta_i$ on $x_i$ based on the PN distribution is given by
\begin{equation} \label{EQN:LCRM:PN-Regression-Model}
	\theta_i | \boldsymbol{x}_i \stackrel{ind}{\sim} \mathcal{PN}_2 \left( \left( \boldsymbol{x}_{i}^{\top} \bet_{1}, \boldsymbol{x}_{i}^{\top} \bet_{2} \right)^{\top}, \BSig \right),
\end{equation}
where $\theta_i \in (-\pi, \pi]$ is the circular response variable, $\boldsymbol{x}_i \in \mathbb{R}^p$ is the vector of linear covariates, $i = 1, 2, \ldots, n$, and $\bet_1, \bet_2 \in \mathbb{R}^p$ are the vector of regression coefficients. Assuming $\BSig = \BI_2$, \citet{Presnell1998} proposed maximum likelihood estimation method for the model parameters involved in (\ref{EQN:LCRM:PN-Regression-Model}). Under the same setup, \citet{Antonio-Pena-Escarela2011} proposed a Bayesian methodology using Metropolis-Hastings within Gibbs algorithm. 
\begin{figure}[!ht]
	\centering
	\includegraphics[width=0.5\linewidth]{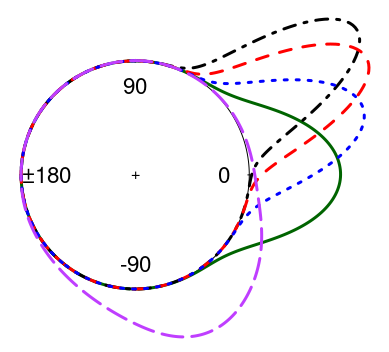}
	\caption{Circular density plots of $\mathcal{PN}_2((\beta_{10} + \beta_{11}x, \beta_{20} + \beta_{21}x)^{\top}, \BI_2)$ for $x = 0$ (dark orchid), $x = 1$ (dark green), $x = 2$ (blue), $x = 5$ (red) and $x = 100$ (black) with coefficients $\beta_{10} = 1, \beta_{11} = 3, \beta_{20} = -2$ and $\beta_{21} = 2$.} 
    \label{FIG:LCRM:Interpretation-Slopes}
\end{figure}
	
For notational simplicity, we use $\theta$ and $\boldsymbol{x}$ as the generic form of $\theta_i$ and $\boldsymbol{x}_i$, respectively. Here, we consider a linear covariate, say $x$, with $\BSig = \BI_2$,  and the angular mean direction of $\theta$ given $x$ is $m(x) = \atantwo(\beta_{20} + \beta_{21} x, \beta_{10} + \beta_{11} x)$, where `$\atantwo$' \citep[pp. 9]{AgarwalPereraPinelas2011} is defined as
\begin{equation*}
	\atantwo(S, C) = \begin{cases}
	\arctan \left( \frac{S}{C} \right) & \mbox{if $C > 0$,}  \\
	\arctan \left( \frac{S}{C} \right) + \pi \times \operatorname{sgn}(S) & \mbox{if $C < 0$,}  \\
	\frac{\pi}{2} \times \operatorname{sgn}(S) & \mbox{if $C = 0$ and $S \neq 0$,}  \\
    \text{undefined} & \mbox{if $C = S = 0$.}
	\end{cases}
\end{equation*}
Note that `$\arctan$' represents the standard inverse tangent function with range $(-\frac{\pi}{2}, \frac{\pi}{2})$, and $\operatorname{sgn}(u) = \mathds{1}(u \geq 0) - \mathds{1}(u < 0)$, for $u \in \mathbb{R}$, where $\mathds{1}(.)$ denotes the indicator function. Then, $m^{'}(x) = \frac{\beta_{10}\beta_{21} - \beta_{11}\beta_{20}}{\Vert \BMu(x) \Vert^2}$, where $m^{'}(x) = \frac{\dd}{\dd x} m(x)$ and $\Vert \BMu(x) \Vert = \sqrt{(\beta_{10} + \beta_{11}x)^2 + (\beta_{20} + \beta_{21}x)^2}$ ($\neq 0$). Thus, a higher concentration will result in a smaller shift in angular mean direction. This behaviour can also be noted in Figure \ref{FIG:LCRM:Interpretation-Slopes}. It is clear that the rate of change in the square of concentration is a linear function of $x$, and  given by
$\frac{\dd}{\dd x} \Vert \BMu(x) \Vert^2 = 2 (\beta_{10} \beta_{11} + \beta_{20} \beta_{21}) + 2 (\beta_{11}^2 + \beta_{21}^2)x.$ When $\beta_{10}\beta_{21} - \beta_{11}\beta_{20} \neq 0$, the angular mean direction of $\theta$ given $x = 0$ is $\atantwo(\beta_{20}, \beta_{10})$, and it gradually moves towards $\atantwo(\beta_{21}, \beta_{11})$ as $x$ increases. 
    
Next, we consider the case when $\beta_{10}\beta_{21} - \beta_{11}\beta_{20} = 0$. In this case, for all values of $x \in \mathbb{R}$, the angular mean direction is $\atantwo(\beta_{20}, \beta_{10})$ and the concentration is independent of $x$ provided $\beta_{11} = \beta_{21} = 0$. When $\beta_{21} > 0$, the angular mean direction is $\atantwo(\beta_{21}, \beta_{11})$ for $x > -\frac{\beta_{20}}{\beta_{21}}$, it is exactly opposite (i.e., $\atantwo(\beta_{21}, \beta_{11}) + \pi$) for $x < -\frac{\beta_{20}}{\beta_{21}}$ and is undefined at $x = -\frac{\beta_{20}}{\beta_{21}}$. For negative values of $\beta_{21}$, the roles of the two regions of $x$ will interchange. One can make similar interpretation for $\beta_{11} (\neq 0)$.

\section{Proposed model for zero-inflated longitudinal circular data} \label{SEC:LCRM:Proposed-Model}
In this section, we propose a two-stage longitudinal circular regression mixed-effects model to analyze zero-inflated longitudinal measurements on the axis of astigmatism over the successive follow-up periods. The individual patients are indexed by $i \in {I} := \{1, \ldots, n\}$ and  for each individual $i$, the measurements are taken at time points indexed by $j \in {J_i} := \{1, \ldots, m_i\}$. We denote the circular response variable, the vector of linear covariates and the circular covariate by $\tYij \in (-\pi, \pi]$, $\xij = (1, x_{1ij}, \ldots, x_{pij})^{\top} \in \mathbb{R}^{p+1}$ and $\tXi \in (-\pi, \pi]$, respectively. Now, we consider the latent variables $Y_{1ij} = \rY \cos{(\tYij)}$ and $Y_{2ij} = \rY \sin{(\tYij)}$ and define the longitudinal circular regression mixed-effects (LCRM) model as
\begin{equation} \label{EQN:LCRM:Stage-I}
    \begin{split}
        Y_{1ij} &= \beta_{10} + \sum_{k = 1}^{p} \beta_{1k} {x}_{kij} + \beta_{1C} \cos{(\tXi)} + \beta_{1S} \sin{(\tXi)} + b_{1i} + \varepsilon_{Y_{1ij}}, \\
		Y_{2ij} &= \beta_{20} + \sum_{k = 1}^{p} \beta_{2k} {x}_{kij} + \beta_{2C} \cos{(\tXi)} + \beta_{2S} \sin{(\tXi)} + b_{2i} + \varepsilon_{Y_{2ij}}.
    \end{split}
\end{equation}
In the LCRM model given in \eqref{EQN:LCRM:Stage-I}, $\xtilde = \left( 1, \xij^{\top}, \cos{(\tXi)}, \sin{(\tXi)} \right)^{\top}$ is the fixed part  and $\boldsymbol{b}_i = (b_{1i}, b_{2i})^\top$ denotes the $i$-th subject-specific random-effect that captures the longitudinal dependence. We assume that $\bi$'s are i.i.d. $\mathcal{N}_2(\boldsymbol{0}, \Sb)$ and the error vectors $\epsYij = (\varepsilon_{Y_{1ij}}, \varepsilon_{Y_{2ij}})^{\top}$ are i.i.d. $\mathcal{N}_2 (\boldsymbol{0}, \BSig_Y)$, for all $i \in I$ and $j \in J_i$. In addition, $\bi$ and $\epsYij$ are assumed to be independently distributed, for all $i \in I$, $j \in J_i$. Therefore, for a given $\boldsymbol{b}_i$, the conditional distribution of $\tYij$ is obtained as 
\begin{align*}
	\tYij | \bet_1, \bet_2, \xij, \tXi, \boldsymbol{b}_i \stackrel{ind}{\sim} \mathcal{PN}_2 \left( (\xtilde^{\top} \bet_1 + b_{1i}, \xtilde^{\top} \bet_2 + b_{2i})^{\top}, \BSig_Y \right),\ \text{for all } i \in I, j \in J_i.
\end{align*}
As discussed before, the aforementioned LCRM model is not suitable for handling excess zeros in either of the circular response and/or the circular covariate. In this context, \citet{Jha-Bhuyan2021} proposed a circular-circular regression model to account for the excess zeros, however, it could not accommodate the longitudinal responses and the linear covariates. To accommodate zero-inflation, we define the  latent variables $\tYij^*$ and $\tXi^*$ as 
\begin{equation} \label{EQN:LCRM:Latent-Theta-Y}
	\tYij =  \begin{cases}
		0 & \mbox{if $\tYij^* \in (\delYone, \delYtwo)$,}  \\
		\tYij^* & \mbox{otherwise,}
	\end{cases}
\end{equation}
and
\begin{equation} \label{EQN:LCRM:Latent-Theta-X}
	\tXi =  \begin{cases}
		0 & \mbox{if $\tXi^* \in (\delXone, \delXtwo)$,}  \\
		\tXi^* & \mbox{otherwise,}
	\end{cases}
\end{equation}
respectively, where $\delYtwo, \delXtwo \in [-\frac{\pi}{2}, \frac{\pi}{2}]$ are known constants. For modeling the zero-inflated circular covariate, we incorporate instrumental variables \citep[Ch-4, pp. 95]{CameronTrivedi2005} which may not directly affect the response $\tYij$, but it can induce changes only through the covariate $\tXi$. We denote the vector of linear instrumental variables by $\vi = (1, v_{1i}, \ldots, v_{qi})^{\top} \in \mathbb{R}^{q+1}$ and the circular instrumental variable by $\tVi \in (-\pi, \pi]$. Based on the aforementioned latent variables, we propose the two-stage LCRM model as
\begin{align}
    \begin{split} \label{EQN:LCRM:ZeroInflated-Stage-I}
        \text{Stage-I:} \quad Y_{1ij}^{*} &= \beta_{10} + \sum_{k = 1}^{p} \beta_{1k} {x}_{kij} + \beta_{1C} \cos{(\tXi^{*})} + \beta_{1S} \sin{(\tXi^{*})} + b_{1i} + \varepsilon_{Y_{1ij}}, \\
        Y_{2ij}^{*} &= \beta_{20} + \sum_{k = 1}^{p} \beta_{2k} {x}_{kij} + \beta_{2C} \cos{(\tXi^{*})} + \beta_{2S} \sin{(\tXi^{*})} + b_{2i} + \varepsilon_{Y_{2ij}},
    \end{split} \\
    \begin{split} \label{EQN:LCRM:ZeroInflated-Stage-II}
        \text{Stage-II:} \quad X_{1i}^{*} &= \alpha_{10} + \sum_{k = 1}^{q} \alpha_{1k} {v}_{ki} + \alpha_{1C} \cos{(\tVi)} + \alpha_{1S} \sin{(\tVi)} + \varepsilon_{X_{1i}}, \\
        X_{2i}^{*} &= \alpha_{20} + \sum_{k = 1}^{q} \alpha_{2k} {v}_{ki} + \alpha_{2C} \cos{(\tVi)} + \alpha_{2S} \sin{(\tVi)} + \varepsilon_{X_{2i}},
    \end{split}
\end{align}
where $$\YijStarStar \quad \text{and} \quad \XiStarStar,$$
for $i \in I, j \in J_i.$ The distributional assumptions for $\bi$ and $\epsYij$ are the same as those considered before. Additionally, we assume $\epsXi = (\varepsilon_{X_{1i}}, \varepsilon_{X_{2i}})^{\top} \stackrel{iid}{\sim} \mathcal{N}_2 (\boldsymbol{0}, \BSig_X)$, and are independent of $\epsYij$ and $\boldsymbol{b}_i$, for all $i\in I$, $j\in J_{i}$. Notably, if $\delYtwo = 0$, the Stage-I of the aforementioned model reduces to the LCRM model, given in (\ref{EQN:LCRM:Stage-I}). Similarly, the Stage-II model reduces to the PN regression model, given in (\ref{EQN:LCRM:PN-Regression-Model}), when $\delXtwo = 0$.

\section{Identifiability issues} \label{SEC:LCRM:Identifiability-Issues}
\citet{Presnell1998} discussed the identifiability issue for the PN distribution, given in \eqref{EQN:LCRM:PN-Density}, and proposed to constrain the distribution by considering $\BSig = \BI_2$. Following the same proposal, we take $\BSig_X = \BSig_Y=\BI_2$ in the two-stage LCRM model proposed in Section \ref{SEC:LCRM:Proposed-Model}. However, these restrictions are not sufficient to ensure the identifiability of the LCRM model, given in \eqref{EQN:LCRM:ZeroInflated-Stage-I}, due to the unconstrained structure of the covariance matrix $\Sb$ of the random-effects distribution. Note that the conditional distribution of $\Yij = (Y_{1ij}, Y_{2ij})^{\top}$ given $\boldsymbol{b}_i$ is $\mathcal{N}_2 \left( \BX + \boldsymbol{b}_i, \BI_2 \right)$ and the marginal distribution of $\Yij$ is $\mathcal{N}_2 \left( \BX, \Gb \right)$, where $\BX = \left(\xtilde^{\top} \bet_1, \xtilde^{\top} \bet_2 \right)^{\top}$ and $\Gb = \BI_2 + \Sb$. Therefore, the circular response $\tYij \sim \mathcal{PN}_2 \left(\BX, \Gb \right)$. For any $c > 1$, consider the following identity given as

\begin{align} \label{EQN:LCRM:Idenfiability-2}
	&f_{PN}(\tYij | c\BX, c^2\Gb) \notag \\
	&= \frac{1}{2\pi} {|c^2\Gb|}^{-\frac{1}{2}} A_{3ij}(c)^{-1} \exp \left\{-\frac{1}{2} A_{1ij}(c) \right\} \left[1 + \frac{ \left( A_{2ij}(c) A_{3ij}(c)^{-\frac{1}{2}} \right)\ \Phi \left( A_{2ij}(c) A_{3ij}(c)^{-\frac{1}{2}} \right)}{\phi \left( A_{2ij}(c) A_{3ij}(c)^{-\frac{1}{2}} \right)} \right] \notag \\
	&= \frac{1}{2\pi} |\Gb|^{-\frac{1}{2}} A_{3ij}^{-1} \exp \left\{-\frac{1}{2} A_{1ij} \right\} \left[1 + \frac{ \left( A_{2ij} A_{3ij}^{-\frac{1}{2}} \right)\ \Phi \left( A_{2ij} A_{3ij}^{-\frac{1}{2}} \right)}{\phi \left( A_{2ij} A_{3ij}^{-\frac{1}{2}} \right)} \right] \notag \\
	&= f_{PN} (\tYij | \BX, \Gb),
\end{align}
where ${A_{1ij}(c) = (c\BX)^{\top} (c^2\Gb)^{-1} (c\BX) = A_{1ij}}$, ${A_{2ij}(c) = (c\BX)^\top (c^2\Gb)^{-1} \wij = c^{-1} A_{2ij}}$, and ${A_{3ij}(c) = \wij^{\top} (c^2\Gb)^{-1} \wij = c^{-2} A_{3ij}}$. Hence, as shown in (\ref{EQN:LCRM:Idenfiability-2}), the marginal density of $\tYij$ remains unaltered even if $(\textbf{B}, \Gb) \neq (c\textbf{B}, c^2\Gb)$ for $c>1$. Thus, the proposed model fails to be identifiable. As per our knowledge, identifiability issues for LCRM model have not been discussed in the literature. To resolve this issue, we provide a sufficient condition for the identifiability in the following theorem. 
\begin{theorem} \label{THEOREM:LCRM:Theorem-1}
    If $\tYij \sim \mathcal{PN}_2 (\BMu_{Y_{ij}}, \Gb)$, where $\BMu_{Y_{ij}} = \BX$ and $\Gb = \BI_2 + \Sb$, then the model, given by (\ref{EQN:LCRM:Stage-I}), is identifiable if the generalized variance of the random-effect $\boldsymbol{b}_i$ is equal to 1 (i.e., $|\Sb| = 1$). 
\end{theorem}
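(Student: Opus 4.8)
The plan is to show that the constraint $|\Sb| = 1$ selects a unique representative from the one–parameter family of observationally equivalent parameters exhibited in \eqref{EQN:LCRM:Idenfiability-2}. Concretely, I would suppose that two parameter triples $(\bet_1, \bet_2, \Sb)$ and $(\bet_1^{*}, \bet_2^{*}, \Sb^{*})$, each satisfying $|\Sb| = |\Sb^{*}| = 1$, induce the same conditional law $\mathcal{PN}_2(\BX, \Gb)$ of $\tYij$ for every admissible covariate vector $\xtilde$, and then deduce that the two triples must coincide. Writing $\textbf{B} = (\bet_1, \bet_2)$ and $\textbf{B}^{*} = (\bet_1^{*}, \bet_2^{*})$ so that $\BX = \textbf{B}^{\top}\xtilde$, and $\Gb = \BI_2 + \Sb$, the goal is injectivity of the map $(\textbf{B}, \Sb) \mapsto \{\mathcal{PN}_2(\textbf{B}^{\top}\xtilde, \Gb)\}_{\xtilde}$ on the constrained parameter set.

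First I would use the fact, displayed in \eqref{EQN:LCRM:Idenfiability-2}, that the only invariance of the projected normal density is the rescaling $(\BMu, \BSig) \mapsto (c\BMu, c^2\BSig)$ for $c > 0$. Hence for each $\xtilde$ there is a scalar $c(\xtilde) > 0$ with $\textbf{B}^{*\top}\xtilde = c(\xtilde)\,\textbf{B}^{\top}\xtilde$ and $\Gb^{*} = c(\xtilde)^{2}\Gb$. Since $\Gb$ and $\Gb^{*}$ do not depend on the covariates, comparing determinants in $\Gb^{*} = c(\xtilde)^2 \Gb$ gives $c(\xtilde)^{4} = |\Gb^{*}|/|\Gb|$, a constant; thus $c(\xtilde) \equiv c$ is free of $\xtilde$, and $\Gb^{*} = c^{2}\Gb$. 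The relation $\textbf{B}^{*\top}\xtilde = c\,\textbf{B}^{\top}\xtilde$ then holds for all $\xtilde$, so provided the vectors $\xtilde$ span the full fixed-effects space (a mild full-rank design condition) I would conclude $\textbf{B}^{*} = c\,\textbf{B}$.

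It remains to force $c = 1$, and this is where the constraint does its work. Substituting $\Gb = \BI_2 + \Sb$ and $\Gb^{*} = \BI_2 + \Sb^{*}$ into $\Gb^{*} = c^{2}\Gb$ yields $\Sb^{*} = c^{2}\Sb + (c^{2}-1)\BI_2$. Diagonalizing $\Sb$ with eigenvalues $s_1, s_2 > 0$ (so that $\Sb^{*}$ shares the eigenvectors, with eigenvalues $c^{2}(1+s_k) - 1$) and imposing $|\Sb| = s_1 s_2 = 1$, a short computation gives
\[
  |\Sb^{*}| = \bigl[c^{2}(1+s_1)-1\bigr]\bigl[c^{2}(1+s_2)-1\bigr] = c^{2}\bigl(2 + s_1 + s_2\bigr)\bigl(c^{2}-1\bigr) + 1 .
\]
Setting $|\Sb^{*}| = 1$ and dividing by the positive factor $2 + s_1 + s_2$ leaves $c^{2}(c^{2}-1) = 0$, so $c = 1$. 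Then $\Gb^{*} = \Gb$, hence $\Sb^{*} = \Sb$, and $\textbf{B}^{*} = \textbf{B}$, which is the desired identifiability.

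The step I expect to be the main obstacle is the first one: rigorously invoking that the rescaling in \eqref{EQN:LCRM:Idenfiability-2} is the \emph{only} source of non-identifiability of the projected normal, i.e.\ the converse direction that two projected normals agree only if their parameters are scale-related. The remaining algebra---propagating a covariate-free scale constant out of $\Gb^{*} = c(\xtilde)^2\Gb$ and the determinant identity that collapses $c$ to $1$---is routine once that characterization and the full-rank covariate condition are in place.
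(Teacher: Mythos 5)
Your algebraic core is correct, and it is exactly the mechanism through which the constraint $|\Sb|=1$ has to act in any proof of this theorem: writing a competing parameter pair as $(\textbf{B}^{*}, \Gb^{*}) = (c\textbf{B}, c^{2}\Gb)$, converting $\Gb^{*} = c^{2}\Gb$ into $\Sb^{*} = c^{2}\Sb + (c^{2}-1)\BI_2$, and noting that when the eigenvalues of $\Sb$ satisfy $s_1 s_2 = |\Sb| = 1$,
\[
|\Sb^{*}| = \bigl[c^{2}(1+s_1)-1\bigr]\bigl[c^{2}(1+s_2)-1\bigr] = (2+s_1+s_2)\,c^{2}(c^{2}-1)+1,
\]
so that $|\Sb^{*}|=1$ forces $c=1$. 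I verified this expansion; it is right. Your device for making $c$ covariate-free (both $\Gb$ and $\Gb^{*}$ are fixed matrices, so $c(\xtilde)^{4} = |\Gb^{*}|/|\Gb|$ is constant in $\xtilde$) is sound, and the full-rank design condition you invoke to pass from $\textbf{B}^{*\top}\xtilde = c\,\textbf{B}^{\top}\xtilde$ for all observed $\xtilde$ to $\textbf{B}^{*} = c\,\textbf{B}$ is the standard assumption implicit in any regression identifiability statement. In this respect your route coincides with the one the paper sets up in Section \ref{SEC:LCRM:Identifiability-Issues}: the constraint on the generalized variance is there precisely to kill the scale factor exhibited in \eqref{EQN:LCRM:Idenfiability-2}.

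The step you flagged is, however, a genuine gap, and it is the only thing separating your proposal from a complete proof. The display \eqref{EQN:LCRM:Idenfiability-2} establishes only the forward implication: the scaling $(\BMu, \Gb) \mapsto (c\BMu, c^{2}\Gb)$ leaves the PN density unchanged. Your first step needs the converse --- that two PN densities can coincide for all $\theta$ \emph{only when} their parameters are related by such a scaling --- and this is not ``the fact displayed in \eqref{EQN:LCRM:Idenfiability-2}''; it is a separate injectivity assertion about the map $(\BMu,\BSig) \mapsto f_{PN}(\cdot \mid \BMu, \BSig)$ modulo the scaling action, and it is exactly the nontrivial content of an identifiability theorem for this family. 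Without it, your argument shows only that the one-parameter family of reparameterizations in \eqref{EQN:LCRM:Idenfiability-2} cannot stay inside the constrained set $\{|\Sb|=1\}$; it does not rule out two distinct constrained pairs producing the same law through some coincidence not of scaling type. To close this, one can work from \eqref{EQN:LCRM:PN-Density} directly: the antipodal-odd part of the density, $f_{PN}(\theta) - f_{PN}(\theta+\pi) = (2\pi)^{-1/2}\,|\BSig|^{-1/2} A_2 A_3^{-3/2} \exp\{-\tfrac{1}{2}(A_1 - A_2^{2}/A_3)\}$, vanishes exactly where $\boldsymbol{w} \perp \BSig^{-1}\BMu$ and hence recovers the direction of $\BSig^{-1}\BMu$, while the antipodal-even part constrains $\BSig^{-1}$ up to a scalar; carrying this comparison through (or citing the characterization that underlies the constraints adopted by \citet{Presnell1998} and \citet{WangGelfand2013}) is the lemma a referee would require before accepting the first sentence of your argument. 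Everything downstream of that lemma in your proposal is complete and correct.
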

\begin{proof}
A detailed proof of the theorem is provided in Section A of the Supplementary File.
\end{proof}

Note that the diagonal elements of the covariance matrix $\Sb$ incorporate the longitudinal dependence within $Y_{1ij}$ and $Y_{2ij}$, respectively, whereas, the off-diagonal element of $\Sb$ captures the dependence between $Y_{1ij}$ and $Y_{2ij}$. Since $r_{Yij}$ is not observable, the aforementioned approach induces redundancy in the model. Hence, the restriction on the generalized variance of the random-effects distribution makes the LCRM model identifiable.

\section{Inferential methodology} \label{SEC:LCRM:Inferencial-Methodology}
We propose a Bayesian methodology using the Gibbs sampler for estimating the model parameters involved in (\ref{EQN:LCRM:ZeroInflated-Stage-I}) and (\ref{EQN:LCRM:ZeroInflated-Stage-II}). Since the likelihood function is analytically intractable, we use data augmentation to work with the complete-data likelihood function. This approach helps to construct the full conditionals and perform the Gibbs sampling algorithm.

\subsection{Bayesian estimation} \label{SEC:LCRM:Bayes-estimation}
Let us denote the latent radii of the circular response and the circular covariate by $\Psi_{r_Y} = \{\rY: i \in I, j \in J_i\}$ and $\Psi_{r_X} = \{\rX: i \in I\}$, respectively. Also, denote the circular latent variables mentioned in (\ref{EQN:LCRM:Latent-Theta-Y}) and (\ref{EQN:LCRM:Latent-Theta-X}) by $\Psi_{\theta_Y} = \{\tYij^*: i \in I, j \in J_i \}$ and $\Psi_{\theta_X} = \{\tXi^*: i \in I \}$, respectively, and $\Psi_b = \{\boldsymbol{b}_{i}: i \in I \}$ denotes the random-effects described in (\ref{EQN:LCRM:ZeroInflated-Stage-I}). Let us further represent the latent vectors by $\Psi_Y = \{\Yij^{*}: i \in I, j \in J_i\}$ and $\Psi_X = \{\Xij^{*}: i \in I\}$, and the observed data by $\mathcal{D} = \{(\tYij, \xij, \tXi, \vi, \tVi): i \in I, j \in J_i \}$. Also, define $\PsiD = \{\Psi_{r_Y}, \Psi_{\theta_Y}, \Psi_{r_X}, \Psi_{\theta_X}, \Psi_b, \mathcal{D} \}$ and $\PsiD^{*} = \{\Psi_{Y}, \Psi_{X}, \Psi_b, \mathcal{D} \}$.
	
We consider conjugate prior densities on $\bet_k$ and $\alp_k$ as $\pi(\bet_k | \BMu_{\bet_k}, \BSig_{\bet_k}) \equiv \mathcal{N}_{p+3} (\BMu_{\bet_k}, \BSig_{\bet_k})$ and $\pi(\alp_k | \BMu_{\alp_k}, \BSig_{\alp_k}) \equiv \mathcal{N}_{q+3} (\BMu_{\alp_k}, \BSig_{\alp_k})$, respectively, for $k = 1,2$. In order to enforce identifiability of the model parameters and conjugacy of the associated posteriors, we reparameterize $\Sb = \left (\begin{smallmatrix}\sigma_1^2 & \rho \sigma_1 \sigma_2 \\ \rho \sigma_1 \sigma_2 & \sigma_2^2\end{smallmatrix} \right)$ as $s_1 = \rho \frac{\sigma_2}{\sigma_1}$ and $s_2 = \sigma_2^2 (1-\rho^2)$. In this setup, we consider a normal-gamma prior: $\pi(s_1 | \tau, \lambda_0) \equiv \mathcal{N} (0, \frac{1}{\tau\lambda_{0}})$ and $\pi(\tau | \nu_0, \kappa_0) \equiv \mathcal{G} (\nu_0, \kappa_0)$, where $\tau = 1/s_2$, and $\lambda_0$, $\nu_0$ and $\kappa_0$ are known hyperparameters. We denote the set of unknown parameters by $\Delta = \{ \bet_1, \bet_2, s_1, \tau, \alp_1, \alp_2 \}$, and $\Delta(-\chi)$ denotes the whole set $\Delta$ except the unknown entity $\chi$. The joint posterior density of the unknown model parameters $\Delta$, the random-effect $\boldsymbol{b}_i$ and the latent variables $\Psi_Y$ and $\Psi_X$, involved in (\ref{EQN:LCRM:ZeroInflated-Stage-I}) and (\ref{EQN:LCRM:ZeroInflated-Stage-II}), can be written as
\begin{align*}
    \pi &(\Delta, \Psi_{b}, \Psi_{Y}, \Psi_{X} | \mathcal{D}) \propto \pi(\bet_1 | \BMu_{\bet_1}, \BSig_{\bet_1})\ \pi(\bet_2 | \BMu_{\bet_2}, \BSig_{\bet_2})\ \pi(s_1 | \tau, \lambda_0)\ \pi(\tau | \nu_0, \kappa_0)\ \pi(\alp_1 | \BMu_{\alp_1}, \BSig_{\alp_1}) \\
    &\times \pi(\alp_2 | \BMu_{\alp_2}, \BSig_{\alp_2}) \times \prod_{i=1}^n \Bigg[ f_{N} \left(\Xij^{*} \vert \BMu_{X_{i}}, \BI_2 \right) \times f_{N} \left(\boldsymbol{b}_{i} \vert \boldsymbol{0}, \Sb \right) \times \Big\{ \prod_{j=1}^{m_i} f_{N} \left(\Yij^{*} \vert \BMu_{Y_{ij}}^{*} + \boldsymbol{b}_i, \BI_2 \right) \Big\} \Bigg],
\end{align*}
where $\BMu_{X_{i}} = (\vtilde^{\top} \alp_1, \vtilde^{\top} \alp_2)^{\top}$, $\vtilde = (1, \vi^{\top}, \cos{(\tVi)}, \sin{(\tVi)})^{\top}$, $\BMu_{Y_{ij}}^{*} = (\xtilde^{*\top} \bet_1, \xtilde^{*\top} \bet_2)^{\top}$ and $\xtilde^{*} = (1, \xij^{\top}, \cos{(\tXi^{*})}, \sin{(\tXi^{*})})^{\top}$, and $f_{N}(. \vert \BMu, \BSig)$ denotes the bivariate normal density with parameters $\BMu$ and $\BSig$. Alternatively, one can consider the latent variables ($\Psi_{r_Y}, \Psi_{\theta_Y}$, $\Psi_{r_X}, \Psi_{\theta_X}$), instead of $\Psi_Y$ and $\Psi_X$, and obtain the joint posterior density as

\begin{align*}
    \pi &(\Delta, \Psi_b, \Psi_{r_Y}, \Psi_{\theta_Y}, \Psi_{r_X}, \Psi_{\theta_X} | \mathcal{D})
    \propto \pi(\bet_1 | \BMu_{\bet_1}, \BSig_{\bet_1})\ \pi(\bet_2 | \BMu_{\bet_2}, \BSig_{\bet_2})\ \pi(s_1 | \tau, \lambda_0)\ \pi(\tau | \nu_0, \kappa_0) \\
    &\times \pi(\alp_1 | \BMu_{\alp_1}, \BSig_{\alp_1})\ \pi(\alp_2 | \BMu_{\alp_2}, \BSig_{\alp_2}) \times \prod_{i=1}^{n} \Bigg[ \Big\{ \left[ f_{R}(\rX | \tXi, \BMu_{X_{i}}) \times f_{PN}(\tXi | \BMu_{X_{i}}, \boldsymbol{I}_2) \right]^{\mathds{1}(\tXi \neq 0)} \\
    &\times \left[ f_{R}(\rX | \tXi^{*}, \BMu_{X_{i}}) \times C_{X_{i}} \times f_{TPN}(\tXi^{*} | \BMu_{X_{i}}, \boldsymbol{I}_2; \delXone, \delXtwo) \right]^{\mathds{1}(\tXi = 0)} \Big\} \Bigg] \\
    &\times \prod_{i=1}^n \Bigg[ f_{N} \left(\boldsymbol{b}_{i} | \boldsymbol{0}, \Sb \right) \times \prod_{j=1}^{m_i} \Big\{ \left[ f_{R}(\rY | \tYij, \BMu_{Y_{ij}}^{*} + \boldsymbol{b}_{i}) \times f_{PN}(\tYij | \BMu_{Y_{ij}}^{*} + \boldsymbol{b}_{i}, \boldsymbol{I}_2) \right]^{\mathds{1}(\tYij \neq 0)} \\
    &\times \left[ f_{R}(\rY | \tYij^{*}, \BMu_{Y_{ij}}^{*} + \boldsymbol{b}_{i}) \times C_{Y_{ij}} \times f_{TPN}(\tYij^{*} | \BMu_{Y_{ij}}^{*} + \boldsymbol{b}_{i}, \boldsymbol{I}_2; \delYone, \delYtwo) \right]^{\mathds{1}(\tYij = 0)} \Big\} \Bigg],
\end{align*}
where $C_{Y_{ij}} = \int_{\delYone}^{\delYtwo} f_{PN}(\tYij^{*} | \BMu_{Y_{ij}}^{*} + \boldsymbol{b}_{i}, \BI_2) d\tYij^{*}$ and $C_{X_{i}} = \int_{\delXone}^{\delXtwo} f_{PN}(\tXi^{*} | \BMu_{X_{i}}, \BI_2) d\tXi^{*}$. Here, the density of the truncated projected normal (TPN) distribution is denoted by $f_{TPN}(\theta^{*} | \BMu, \BI_2; -\delta, \delta)$ $= C^{-1} f_{PN}(\theta^{*} | \BMu, \BI_2) \mathds{1}(\theta^{*} \in (-\delta, \delta))$ with $C = \int_{-\delta}^{\delta} f_{PN}(\theta^{*} | \BMu, \BI_2) d\theta^{*}$. Now we obtain the full conditional densities of the latent variables $\tYij^*$, $\rY$, $\tXi^*$ and $\rX$, for all $i \in I, j \in J_i$, the random-effect $\boldsymbol{b}_{i}$, for each $i \in I$, and the model parameters involved in (\ref{EQN:LCRM:ZeroInflated-Stage-I}) and (\ref{EQN:LCRM:ZeroInflated-Stage-II}), from the above joint posterior densities. These conditional densities have the following form:

\begin{align}
	\pi (\tYij^{*} | \Delta, \PsiD(-\Psi_{\theta_Y})) &\equiv 
	\begin{cases}
		\mathds{1}(\tYij^{*} = \tYij) & \mbox{if $\tYij \neq 0,$} \\
		f_{TPN}(\tYij^{*} | \BMu_{Y_{ij}}^{*} + \boldsymbol{b}_{i}, \boldsymbol{I}_2; \delYone, \delYtwo) & \mbox{otherwise},
	\end{cases}  \label{EQN:LCRM:Conditional-thetaY} \\[1em]
	\pi (\rY | \Delta, \PsiD(-\Psi_{r_Y})) &\equiv K_{Y}^{-1} \rY \exp \left\{-\frac{1}{2} \left(\rY - a_{Y_{ij}}^{*} \right)^2 \right\} \mathds{1}(\rY \in (0, \infty)),  \label{EQN:LCRM:Conditional-rY} \\[1em]
	\pi (\bet_k | \Delta(-\bet_k), \PsiD^{*}) &\equiv \mathcal{N}_{p+3} \left( \BG_k^{-1} \Big\{ \sum_{i=1}^n \sum_{j=1}^{m_i} \xtilde^{*} (Y_{kij}^{*} - b_{ki}) + \BSig_{\bet_k}^{-1} \BMu_{\bet_k} \Big\}, \BG_k^{-1} \right),  \label{EQN:LCRM:Conditional-betak-conjugate} \\[1em]
	\pi (\boldsymbol{b}_{i} | \Delta, \PsiD^{*}(-\Psi_b)) &\equiv \mathcal{N}_2 \left(\boldsymbol{S}_b^{-1} \sum_{j=1}^{m_i} \left(\Yij^{*} - \BX^{*}\right), \boldsymbol{S}_b^{-1}\right),\ \text{for each } i \in I,  \label{EQN:LCRM:Conditional-bi-conjugate}, \\[1em]
    \pi (s_1 | \Delta(-s_1), \PsiD^{*}) &\equiv \mathcal{N} \left(\frac{\sum_{i=1}^n b_{1i} b_{2i}}{\lambda_0 + \sum_{i=1}^n b_{1i}^2}, \frac{1}{\tau (\lambda_0 + \sum_{i=1}^n b_{1i}^2)} \right),   \label{EQN:LCRM:Conditional-s1-conjugate} \\[1em]
	\pi (\tau | \Delta(-\tau), \PsiD^{*}) &\equiv \mathcal{G} \left(\nu_0 + \frac{n+1}{2}, \kappa_0 + \frac{\sum_{i=1}^n (b_{2i} - s_1 b_{1i})^2 + \lambda_0 s_1^2}{2} \right),   \label{EQN:LCRM:Conditional-tau-conjugate} \\[1em]
    \pi (\tXi^{*} | \Delta, \PsiD(-\Psi_{\theta_X})) &\equiv 
	\begin{cases}
		\mathds{1}(\tXi^{*} = \tXi) & \mbox{if $\tXi \neq 0,$} \\
		f_{TPN}(\tXi^{*} | \BMu_{X_{i}}, \boldsymbol{I}_2; \delXone, \delXtwo) & \mbox{otherwise},
	\end{cases}   \label{EQN:LCRM:Conditional-thetaX} \\[1em]
	\pi (\rX | \Delta, \PsiD(-\Psi_{r_X})) &\equiv K_{X}^{-1} \rX \exp \left\{-\frac{1}{2} \left(\rX - a_{X_{i}}^{*} \right)^2 \right\} \mathds{1}(\rX \in (0, \infty)),  \label{EQN:LCRM:Conditional-rX} \\[1em]
	\pi (\alp_k | \Delta(-\alp_k), \PsiD^{*}) &\equiv \mathcal{N}_{q+3} \left( \BH_k^{-1} \Big\{ \sum_{i=1}^n \vtilde X_{ki}^{*} + \BSig_{\alp_k}^{-1} \BMu_{\alp_k} \Big\}, \BH_k^{-1} \right),  \label{EQN:LCRM:Conditional-alphak-conjugate}
\end{align}
for $k = 1,2$, where $K_{Y} = \sqrt{2\pi} [\phi(a_{Y_{ij}}^{*}) + a_{Y_{ij}}^{*} \Phi(a_{Y_{ij}}^{*})]$, $a_{Y_{ij}}^{*} = \boldsymbol{w}_{Y_{ij}}^{*\top} (\BMu_{Y_{ij}}^{*} + \boldsymbol{b}_{i})$, \\ $\boldsymbol{w}_{Y_{ij}}^{*} = (\cos{(\tYij^{*})}, \sin{(\tYij^{*})})^{\top}$, $\BG_k = \sum_{i=1}^n \sum_{j=1}^{m_i} \xtilde^{*} \xtilde^{*\top} + \BSig_{\bet_k}^{-1}$, $\boldsymbol{S}_b = (m_i\BI_2 + \Sb^{-1})$, $K_{X} = \sqrt{2\pi} [\phi(a_{X_{i}}^{*}) + a_{X_{i}}^{*} \Phi(a_{X_{i}}^{*})]$, $a_{X_{i}}^{*} = \boldsymbol{w}_{X_{i}}^{*\top} \BMu_{X_{i}}$, $\boldsymbol{w}_{X_{i}}^{*} = (\cos{(\tXi^{*})}, \sin{(\tXi^{*})})^{\top}$ and $\BH_k = \sum_{i=1}^n \vtilde \vtilde^{\top} + \BSig_{\alp_k}^{-1}$. After drawing the posterior samples of $s_1$ and $s_2 \left(= 1/\tau \right)$ from the corresponding full conditional densities given in (\ref{EQN:LCRM:Conditional-s1-conjugate}) and (\ref{EQN:LCRM:Conditional-tau-conjugate}), we obtain the posterior samples of $\rho$ and $\sigma_2^2$. Note that the samples from the conditional posterior distribution of $\sigma_1^2$ are generated using the identifiability constraint $\sigma_1^2 = 1/(\sigma_2^2 (1 - \rho^2))$. The derivations of the full conditional densities of the associated model parameters and random-effects are provided in Section B of the Supplementary File. The sampling algorithms for generating latent variables, mentioned in (\ref{EQN:LCRM:Conditional-thetaY}), (\ref{EQN:LCRM:Conditional-rY}), (\ref{EQN:LCRM:Conditional-thetaX}) and (\ref{EQN:LCRM:Conditional-rX}), are discussed in the following subsection.

\subsection{Sampling procedures} \label{SEC:LCRM:Sampling-algorithms}
One can consider accept-reject algorithm to generate samples from the conditional distribution of $\tYij^{*}$ when $\tYij=0$. In that case, one can draw observations from $f_{PN}(\tYij^{*} | \BMu_{Y_{ij}}^{*} + \boldsymbol{b}_{i}, \boldsymbol{I}_2)$ and then accept those which lie within $(\delYone, \delYtwo)$. Similarly, when $\tXi=0$, $\tXi^{*}$  can be drawn from the interval $(\delXone, \delXtwo)$. However, the acceptance rate is very low for small values of $\delYtwo$ or $\delXtwo$. To address this issue, we propose a generic algorithm to generate samples from $f_{TPN}(\theta^{*} | \boldsymbol{\mu}, \boldsymbol{I}_2; \delta_1, \delta_2)$, where $-\pi < \delta_1 < \delta_2 \leq \pi$. Note that for our application $\delta_1=-\delta_2$. We denote the truncated normal density by $f_{TN}(. \vert \mu, \sigma^{2}; a_{1}, a_{2})$ with parameters $\mu$ and $\sigma^{2}$ and the truncation range $(a_{1}, a_{2})$. Also, we define ${I(k_{1}, k_{2}) \coloneqq \int_{k_{1}}^{k_{2}} f_{PN}(\theta^{*} \mid \BMu, \BI_2) d\theta^{*}}$, where ${-\pi < k_{1} \leq k_{2} \leq \pi}$. The proposed algorithm is provided below.
\begin{algorithm}[ht!]
\caption*{\textbf{Sampling algorithm from a TPN distribution}}
\small
\begin{algorithmic}
    \State sample $z_{1} \sim f_{TN}(z_{1} \mid \mu_{1}, 1; 0, \infty)$;

    \If{$\delta_{1}, \delta_{2} \in [-\frac{\pi}{2}, \frac{\pi}{2}]$,}
        \State sample $z_{2} \mid z_{1} \sim f_{TN}(z_{2} \mid \mu_{2}, 1; z_{1} \tan{\delta_{1}}, z_{1} \tan{\delta_{2}})$;
        \State calculate $\theta^{*} = \atantwo(z_{2}, z_{1})$;

    \ElsIf{$\delta_{1}, \delta_{2} \in (\frac{\pi}{2}, \pi]$,}
        \State sample $z_{2} \mid z_{1} \sim f_{TN}(z_{2} \mid \mu_{2}, 1; z_{1} \tan{\delta_{1}}, z_{1} \tan{\delta_{2}})$;
        \State calculate $\theta^{*} = \atantwo(z_{2}, z_{1}) + \pi$;

    \ElsIf{$\delta_{1}, \delta_{2} \in (-\pi, -\frac{\pi}{2})$,}
        \State sample $z_{2} \mid z_{1} \sim f_{TN}(z_{2} \mid \mu_{2}, 1; z_{1} \tan{\delta_{1}}, z_{1} \tan{\delta_{2}})$;
        \State calculate $\theta^{*} = \atantwo(z_{2}, z_{1}) - \pi$;

    \ElsIf{$\delta_{1} \in (-\pi, -\frac{\pi}{2})$ and $\delta_{2} \in [-\frac{\pi}{2}, \frac{\pi}{2}]$,}
        \State sample $H_{1} \sim \text{Bernoulli}(p_{1})$ with $p_{1} = I(\delta_{1}, -\frac{\pi}{2})/I(\delta_{1}, \delta_{2})$;
        \State sample ${z_{2} \mid z_{1}, H_{1} \sim p_{1} f_{TN}(z_{2} \mid \mu_{2}, 1; z_{1} \tan{\delta_{1}}, \infty) + (1 - p_{1}) f_{TN}(z_{2} \mid \mu_{2}, 1; -\infty, z_{1} \tan{\delta_{2}})}$; 
        \State calculate $\theta^{*} = \left(\atantwo(z_{2}, z_{1}) - \pi \right) \mathds{1}(H_{1} = 1) + \atantwo(z_{2}, z_{1}) \mathds{1}(H_{1} = 0)$;

    \ElsIf{$\delta_{1} \in [-\frac{\pi}{2}, \frac{\pi}{2}]$ and $\delta_{2} \in (\frac{\pi}{2}, \pi]$,}
        \State sample $H_{2} \sim \text{Bernoulli}(p_{2})$ with $p_{2} = I(\delta_{1}, \frac{\pi}{2})/I(\delta_{1}, \delta_{2})$;
        \State sample ${z_{2} \mid z_{1}, H_{2} \sim p_{2} f_{TN}(z_{2} \mid \mu_{2}, 1; z_{1} \tan{\delta_{1}}, \infty) + (1 - p_{2}) f_{TN}(z_{2} \mid \mu_{2}, 1; -\infty, z_{1} \tan{\delta_{2}})}$; 
        \State calculate $\theta^{*} = \atantwo(z_{2}, z_{1}) \mathds{1}(H_{2} = 1) + \left( \atantwo(z_{2}, z_{1}) + \pi \right) \mathds{1}(H_{2} = 0)$;

    \ElsIf{$\delta_{1} \in (-\pi, -\frac{\pi}{2})$ and $\delta_{2} \in (\frac{\pi}{2}, \pi]$,}
        \State sample $(H_{1}, H_{2}) \sim \text{Trinomial}(1, p_{1}, p_{2})$ with ${p_{1} = I(\delta_{1}, -\frac{\pi}{2})/I(\delta_{1}, \delta_{2}), p_{2} = I(-\frac{\pi}{2}, \frac{\pi}{2})/I(\delta_{1}, \delta_{2})}$;
        % \State \hspace{3.4cm} and $p_{2} = I(-\frac{\pi}{2}, \frac{\pi}{2})/I(\delta_{1}, \delta_{2})$;
        \State sample ${z_{2} \mid z_{1}, H_{1}, H_{2} \sim p_{1} f_{TN}(z_{2} \mid \mu_{2}, 1; z_{1} \tan{\delta_{1}}, \infty) + p_{2} \phi(z_{2} - \mu_{2})}$
        \State \hspace{4.2cm} ${+ (1 - p_{1} - p_{2}) f_{TN}(z_{2} \mid \mu_{2}, 1; -\infty, z_{1} \tan{\delta_{2}})}$;
        \State calculate $\theta^{*} = {\left(\atantwo(z_{2}, z_{1}) - \pi \right) \mathds{1}(H_{1} = 1, H_{2} = 0) + \atantwo(z_{2}, z_{1}) \mathds{1}(H_{1} = 0, H_{2} = 1)}$ 
        \State \hspace{7.6cm} ${+ \left( \atantwo(z_{2}, z_{1}) + \pi \right) \mathds{1}(H_{1} = 0, H_{2} = 0)}$;   
    \EndIf
\end{algorithmic}
\end{algorithm}

For the purpose of generating the latent variables $\rY$ and $\rX$ using (\ref{EQN:LCRM:Conditional-rY}) and (\ref{EQN:LCRM:Conditional-rX}), respectively, we apply the sampling algorithm proposed by \citet{Hernandez2017} given below. 
\begin{algorithm}[ht!] 
	\caption*{\textbf{Sampling algorithm for generating $R$}}
    \small
	\begin{algorithmic}
		\State take an initial value $R = r_0$;
		\State sample $t_0 \sim \mathcal{U} \Big(0, \exp \Big\{-\frac{1}{2} (r_0 - \wmu)^2 \Big\} \Big)$;
		\State calculate $\zeta_1 = \max \Big\{0, \wmu - \sqrt{-2\log(t_0)} \Big\}$ and $\zeta_2 = \wmu + \sqrt{-2\log(t_0)}$ (using $t_0$);
		\State draw $\kappa \sim \mathcal{U}(0,1)$ independently;
		\State update $r = \sqrt{(\zeta_2^2 - \zeta_1^2) \kappa + \zeta_1^2}$ (using the inverse cumulative distribution function technique)
	\end{algorithmic}
\end{algorithm}

\section{Generalization} \label{SEC:LCRM:Some-Generalizations}
In Section \ref{SEC:LCRM:Proposed-Model}, we consider that zero observations occur because of censoring within a specified interval. However, in some real scenarios, the presence of zeros may be due to two reasons: randomness as well as censoring. To model the aforementioned phenomenon, one can generalize the proposed model, given by (\ref{EQN:LCRM:ZeroInflated-Stage-I}) and (\ref{EQN:LCRM:ZeroInflated-Stage-II}).  We denote the latent indicator variable $Z_{Y_{ij}}$ ($Z_{X_i}$), which takes value 1 if zero occurs due to randomness, and 0, otherwise. Then, we can express $\tYij$ and $\tXi$ as follows:
\begin{align} 
    \tYij &= Z_{Y_{ij}} \delta_{0}^{Y} + (1 - Z_{Y_{ij}})\ \tYij^{*} \mathds{1}(\tYij^* \notin (\delYone, \delYtwo)), \nonumber \\
    \tXi &= Z_{X_{i}} \delta_{0}^{X} + (1 - Z_{X_{i}})\ \tXi^{*} \mathds{1}(\tXi^* \notin (\delXone, \delXtwo)), \nonumber
\end{align}
where $\delta_{0}^{Y}$ ($\delta_{0}^{X}$) represents a degenerate distribution where the probability mass function is $1$ at $\tYij = 0$ ($\tXi = 0$), and $0$, otherwise. Note that the latent variables $\tYij^*$ and $\tXi^*$ are the same as given in (\ref{EQN:LCRM:Latent-Theta-Y}) and (\ref{EQN:LCRM:Latent-Theta-X}), respectively. Here, $Z_{Y_{ij}}$ and $Z_{X_i}$ are Bernoulli random variables with parameters $\eta_Y$ and $\eta_X$, respectively. As discussed in Section \ref{SEC:LCRM:Bayes-estimation}, the same set of conjugate priors on $\bet_1$, $\bet_2$, $s_1$, $\tau$, $\alp_1$ and $\alp_2$ are considered for the aforementioned model. We also consider conjugate priors for $\eta_{Y}$ and $\eta_{X}$ as $\pi(\eta_{Y} \vert c_{Y}, d_{Y}) \equiv \text{Beta}(c_{Y}, d_{Y})$ and $\pi(\eta_{X} \vert c_{X}, d_{X}) \equiv \text{Beta}(c_{X}, d_{X})$, respectively, where $c_{Y}$, $d_{Y}$, $c_{X}$ and $d_{X}$ are known hyperparameters. The full conditional densities of the parameters and latent variables involved in the aforementioned model, along with the Gibbs sampler, are provided in Section C of the Supplementary File.

\section{Simulations} \label{SEC:LCRM:Simulations}
We study the performance of our proposed method through extensive simulations. For this purpose, we consider different choices of parameter values and generate data for three different sample sizes, $n = 50, 100,$ and $500$. In Stage-I, one linear covariate $x_{1ij}$ and a circular covariate $\tXi$ are considered, and three repeated measurements are generated for each individual. In Stage-II, we use one linear instrumental variable $v_{1i}$. The simulated data are obtained using the following specifications of the zero-inflated LCRM model in a two-stage setup as
\begin{align*}
	\begin{pmatrix} Y_{1ij}^{*} \\ Y_{2ij}^{*} \end{pmatrix} \Big| \bet_1, \bet_2, x_{1ij}, \tXi^{*}, \boldsymbol{b}_{i} &\sim \mathcal{N}_2 \left( \begin{pmatrix} \xtilde^{*\top} \bet_1 + b_{1i} \\ \xtilde^{*\top} \bet_2 + b_{2i} \end{pmatrix}, \BI_2 \right), \quad \begin{pmatrix} b_{1i} \\ b_{2i} \end{pmatrix} \Big| \Sb \sim \mathcal{N}_2 \left(\boldsymbol{0}, \Sb \right), \\ 
	\text{and} \quad 
	\begin{pmatrix} X_{1i}^{*} \\ X_{2i}^{*} \end{pmatrix} \Big| \alp_1, \alp_2, v_{1i} &\sim \mathcal{N}_2 \left( \begin{pmatrix} \vtilde^{\top} \alp_1 \\ \vtilde^{\top} \alp_2 \end{pmatrix}, \BI_2 \right),\ i = 1, 2,\ldots, n,\ j = 1, 2, 3,
\end{align*}
where $\bet_1 = (\beta_{10}, \beta_{11}, \beta_{1C}, \beta_{1S})^{\top}$, $\bet_2 = (\beta_{20}, \beta_{21}, \beta_{2C}, \beta_{2S})^{\top}$, $\alp_1 = (\alpha_{10}, \alpha_{11})^{\top}$ and $\alp_2 = (\alpha_{20}, \alpha_{21})^{\top}$.
    
\begin{table}
	\caption{\textit{Simulation study for Choice-I with 10\% zeros in both $\theta_{Y}$ and $\theta_{X}$}}
	\footnotesize
	\centering
	\begin{tabular}{|l| l l l l|l| l l l l|}
		\hline
		\multicolumn{10}{|c|}{$n = 50$} \\
		\hline
		Parameters &\ Mean &\ SE &\ RB & CP & Parameters &\ Mean &\ SE &\ RB & CP \\ \hline
		$\beta_{10}$ = 8.3 &\ 8.639 & 0.948 &\ 0.041 & 0.92 & $\beta_{20}$ = -1.3 & -1.310 & 0.448 & 0.008 & 0.95 \\
		$\beta_{11}$ = 4.6 &\ 4.784 & 0.611 &\ 0.040 & 0.94 & $\beta_{21}$ = 0.8 &\ 0.842 & 0.172 & 0.053 & 0.93 \\
		$\beta_{1C}$ = 6.5 &\ 6.796 & 0.828 &\ 0.046 & 0.92 & $\beta_{2C}$ = 2.1 &\ 2.167 & 0.500 & 0.032 & 0.95 \\
		$\beta_{1S}$ = -5.1 & -5.229 & 1.207 &\ 0.025 & 0.97 & $\beta_{2S}$ = 2.4 &\ 2.449 & 0.898 & 0.020 & 0.95 \\
		$\rho$ = 0.5 &\ 0.393 & 0.466 & -0.215 & 0.94 & $\sigma_{2}^2$ = 2 &\ 2.535 & 1.218 & 0.268 & 0.98 \\
		$\alpha_{10}$ = -6.4 & -6.757 & 1.232 &\ 0.056 & 0.94 & $\alpha_{20}$ = 1.8 &\ 1.871 & 0.541 & 0.040 & 0.95 \\
		$\alpha_{11}$ = 4.5 &\ 4.750 & 0.759 &\ 0.055 & 0.94 & $\alpha_{21}$ = -0.8 & -0.836 & 0.238 & 0.045 & 0.93 \\ \hline
		\hline
		\multicolumn{10}{|c|}{$n = 100$} \\
		\hline
		Parameters &\ Mean &\ SE &\ RB & CP & Parameters &\ Mean &\ SE &\ RB & CP \\ \hline
		$\beta_{10}$ = 8.3 &\ 8.457 & 0.621 &\ 0.019 & 0.94 & $\beta_{20}$ = -1.3 & -1.336 & 0.303 & 0.028 & 0.96 \\
		$\beta_{11}$ = 4.6 &\ 4.686 & 0.412 &\ 0.019 & 0.94 & $\beta_{21}$ = 0.8 &\ 0.816 & 0.119 & 0.020 & 0.93 \\
		$\beta_{1C}$ = 6.5 &\ 6.648 & 0.531 &\ 0.023 & 0.93 & $\beta_{2C}$ = 2.1 &\ 2.160 & 0.339 & 0.028 & 0.95 \\
		$\beta_{1S}$ = -5.1 & -5.140 & 0.784 &\ 0.008 & 0.94 & $\beta_{2S}$ = 2.4 &\ 2.478 & 0.614 & 0.032 & 0.94 \\
		$\rho$ = 0.5 &\ 0.467 & 0.329 & -0.066 & 0.94 & $\sigma_{2}^2$ = 2 &\ 2.272 & 0.758 & 0.136 & 0.96 \\
		$\alpha_{10}$ = -6.4 & -6.534 & 0.845 &\ 0.021 & 0.94 & $\alpha_{20}$ = 1.8 &\ 1.846 & 0.372 & 0.026 & 0.95 \\
		$\alpha_{11}$ = 4.5 &\ 4.588 & 0.513 &\ 0.019 & 0.94 & $\alpha_{21}$ = -0.8 & -0.822 & 0.165 & 0.028 & 0.95 \\ \hline
		\hline
		\multicolumn{10}{|c|}{$n = 500$} \\
		\hline
		Parameters &\ Mean &\ SE &\ RB & CP & Parameters &\ Mean &\ SE &\ RB & CP \\ \hline
		$\beta_{10}$ = 8.3 &\ 8.335 & 0.265 &\ 0.004 & 0.96 & $\beta_{20}$ = -1.3 & -1.313 & 0.133 & 0.010 & 0.94 \\
		$\beta_{11}$ = 4.6 &\ 4.620 & 0.178 &\ 0.004 & 0.95 & $\beta_{21}$ = 0.8 &\ 0.804 & 0.051 & 0.005 & 0.95 \\
		$\beta_{1C}$ = 6.5 &\ 6.530 & 0.230 &\ 0.005 & 0.94 & $\beta_{2C}$ = 2.1 &\ 2.115 & 0.147 & 0.007 & 0.94 \\
		$\beta_{1S}$ = -5.1 & -5.123 & 0.329 &\ 0.004 & 0.95 & $\beta_{2S}$ = 2.4 &\ 2.416 & 0.263 & 0.007 & 0.94 \\
		$\rho$ = 0.5 &\ 0.499 & 0.146 & -0.002 & 0.97 & $\sigma_{2}^2$ = 2 &\ 2.046 & 0.243 & 0.023 & 0.94 \\
		$\alpha_{10}$ = -6.4 & -6.413 & 0.365 &\ 0.002 & 0.94 & $\alpha_{20}$ = 1.8 &\ 1.801 & 0.167 & 0.001 & 0.94 \\
		$\alpha_{11}$ = 4.5 &\ 4.510 & 0.220 &\ 0.002 & 0.95 & $\alpha_{21}$ = -0.8 & -0.800 & 0.073 & 0.000 & 0.95 \\ \hline
	\end{tabular}
	\label{TABLE:LCRM:SimStudy-Resp10-Cov10-conjugate}
\end{table}
\begin{table}
	\caption{\textit{Simulation study for Choice-I with 35\% zeros in $\theta_{Y}$ and 30\% zeros in $\theta_{X}$}}
	\footnotesize
	\centering
	\begin{tabular}{|l| l l l l|l| l l l l|}
		\hline
		\multicolumn{10}{|c|}{$n = 50$} \\
		\hline
		Parameters &\ Mean &\ SE &\ RB & CP & Parameters &\ Mean &\ SE &\ RB & CP \\ \hline
		$\beta_{10}$ = 13.5 & 13.948 & 1.942 &\ 0.033 & 0.96 & $\beta_{20}$ = -1.3 & -1.408 & 0.572 & 0.083 & 0.96 \\ 
		$\beta_{11}$ = -8.6 & -8.975 & 1.154 &\ 0.044 & 0.94 & $\beta_{21}$ = 0.5 &\ 0.528 & 0.155 & 0.056 & 0.92 \\ 
		$\beta_{1C}$ = 9.2 &\ 9.825 & 1.769 &\ 0.068 & 0.95 & $\beta_{2C}$ = 1.2 &\ 1.309 & 0.549 & 0.091 & 0.95 \\ 
		$\beta_{1S}$ = -8.1 & -7.546 & 3.886 & -0.068 & 0.98 & $\beta_{2S}$ = 2.4 &\ 2.620 & 1.306 & 0.092 & 0.96 \\ 
		$\rho$ = 0.9 &\ 0.838 & 0.235 & -0.069 & 1.00 & $\sigma_{2}^2$ = 1 &\ 1.853 & 1.609 & 0.853 & 1.00 \\ 
		$\alpha_{10}$ = -8.4 & -8.559 & 1.617 &\ 0.019 & 0.94 & $\alpha_{20}$ = 1.8 &\ 1.853 & 0.550 & 0.030 & 0.94 \\ 
		$\alpha_{11}$ = 10.5 & 10.739 & 1.582 &\ 0.023 & 0.94 & $\alpha_{21}$ = -0.8 & -0.822 & 0.249 & 0.028 & 0.95 \\ \hline
		\hline
		\multicolumn{10}{|c|}{$n = 100$} \\
		\hline
		Parameters &\ Mean &\ SE &\ RB & CP & Parameters &\ Mean &\ SE &\ RB & CP \\ \hline
		$\beta_{10}$ = 13.5 & 13.950 & 1.380 &\ 0.033 & 0.95 & $\beta_{20}$ = -1.3 & -1.331 & 0.369 & 0.024 & 0.97 \\ 
		$\beta_{11}$ = -8.6 & -8.899 & 0.813 &\ 0.035 & 0.92 & $\beta_{21}$ = 0.5 &\ 0.526 & 0.109 & 0.051 & 0.93 \\ 
		$\beta_{1C}$ = 9.2 &\ 9.698 & 1.211 &\ 0.054 & 0.94 & $\beta_{2C}$ = 1.2 &\ 1.254 & 0.361 & 0.045 & 0.97 \\ 
		$\beta_{1S}$ = -8.1 & -7.941 & 2.657 & -0.020 & 0.96 & $\beta_{2S}$ = 2.4 &\ 2.513 & 0.865 & 0.047 & 0.96 \\ 
		$\rho$ = 0.9 &\ 0.887 & 0.141 & -0.015 & 1.00 & $\sigma_{2}^2$ = 1 &\ 1.723 & 1.390 & 0.723 & 1.00 \\ 
		$\alpha_{10}$ = -8.4 & -8.471 & 1.111 &\ 0.008 & 0.94 & $\alpha_{20}$ = 1.8 &\ 1.829 & 0.380 & 0.016 & 0.93 \\ 
		$\alpha_{11}$ = 10.5 & 10.611 & 1.084 &\ 0.011 & 0.96 & $\alpha_{21}$ = -0.8 & -0.816 & 0.171 & 0.020 & 0.95 \\ \hline
		\hline
		\multicolumn{10}{|c|}{$n = 500$} \\
		\hline
		Parameters &\ Mean &\ SE &\ RB & CP & Parameters &\ Mean &\ SE &\ RB & CP \\ \hline
		$\beta_{10}$ = 13.5 & 13.658 & 0.605 &\ 0.012 & 0.92 & $\beta_{20}$ = -1.3 & -1.302 & 0.156 &\ 0.001 & 0.95 \\
		$\beta_{11}$ = -8.6 & -8.665 & 0.355 &\ 0.008 & 0.93 & $\beta_{21}$ = 0.5 &\ 0.509 & 0.049 &\ 0.018 & 0.94 \\
		$\beta_{1C}$ = 9.2 &\ 9.297 & 0.471 &\ 0.011 & 0.96 & $\beta_{2C}$ = 1.2 &\ 1.213 & 0.152 &\ 0.011 & 0.95 \\
		$\beta_{1S}$ = -8.1 & -8.175 & 1.063 &\ 0.009 & 0.95 & $\beta_{2S}$ = 2.4 &\ 2.411 & 0.363 &\ 0.005 & 0.95 \\
		$\rho$ = 0.9 &\ 0.892 & 0.077 & -0.009 & 1.00 & $\sigma_{2}^2$ = 1 &\ 1.175 & 0.462 &\ 0.175 & 1.00 \\
		$\alpha_{10}$ = -8.4 & -8.421 & 0.493 &\ 0.003 & 0.97 & $\alpha_{20}$ = 1.8 &\ 1.799 & 0.162 & -0.001 & 0.96 \\
		$\alpha_{11}$ = 10.5 & 10.525 & 0.477 &\ 0.002 & 0.97 & $\alpha_{21}$ = -0.8 & -0.799 & 0.074 & -0.002 & 0.95 \\ \hline
	\end{tabular}
	\label{TABLE:LCRM:SimStudy-Resp35-Cov30-conjugate}
\end{table}
    
For all the studies, we generate $x_{1ij}$ and $v_{1i}$ from $\mathcal{N}(0,1)$ and $\mathcal{N}(2,1)$, respectively. We fix $\delYtwo = \delXtwo = 0.035$ radians ($2^\circ$) in all the studies and choose several sets of parameter values so that the approximate percentage of zeros in the longitudinal circular response and the circular covariate are given by (10\%, 10\%), (35\%, 30\%), (15\%, 0\%), (0\%, 15\%), and (0\%, 0\%). We consider a set of conjugate priors as discussed in Section \ref{SEC:LCRM:Bayes-estimation} with $\BMu_{\bet_1} = \BMu_{\bet_2} = \boldsymbol{0}$, $\BSig_{\bet_1} = \BSig_{\bet_2} = 100\BI_4$, $\BMu_{\alp_1} = \BMu_{\alp_2} = \boldsymbol{0}$, $\BSig_{\alp_1} = \BSig_{\alp_2} = 100\BI_2$, $\lambda_0 = 1$, $\nu_0 = 1$, and $\kappa_0 = 0.01$. The aforementioned choices of hyperparameters induce a high variability in the prior distributions which only provide vague information about the parameters. We generate 1,00,000 posterior samples from the full conditional densities of the associated model parameters using the Gibbs sampling algorithm as given in Section \ref{SEC:LCRM:Inferencial-Methodology} and discard the first 40,000 samples as a burn-in period. We also calculate the posterior mean, standard error (SE) and relative bias (RB) of all the parameters based on every 10th iterate from the remaining samples. The whole procedure is repeated 500 times, and finally, the average estimates are provided in Tables \ref{TABLE:LCRM:SimStudy-Resp10-Cov10-conjugate}-\ref{TABLE:LCRM:SimStudy-Resp0-Cov0-conjugate}. As we expect, the standard errors and biases are decreasing with increasing sample sizes. In addition, we report the coverage probability (CP) corresponding to all the parameters. We observe that the CPs for most of the parameters are around 95\%.

\begin{table}[!ht]
	\caption{\textit{Simulation study for Choice-I with 15\% zeros in $\theta_{Y}$ only}}
	\footnotesize
	\centering
	\begin{tabular}{|l| l l l l|l| l l l l|}
		\hline
		\multicolumn{10}{|c|}{$n = 50$} \\
		\hline
		Parameters &\ Mean &\ SE &\ RB & CP & Parameters &\ Mean &\ SE &\ RB & CP \\ \hline
		$\beta_{10}$ = 10.2 & 10.818 & 1.872 &\ 0.061 & 0.92 & $\beta_{20}$ = -1.3 & -1.348 & 1.894 & 0.037 & 0.95 \\
		$\beta_{11}$ = -5.6 & -5.794 & 0.675 &\ 0.035 & 0.93 & $\beta_{21}$ = 0.8 &\ 0.829 & 0.181 & 0.037 & 0.94 \\
		$\beta_{1C}$ = 2.5 &\ 2.324 & 1.041 & -0.071 & 0.93 & $\beta_{2C}$ = 2.6 &\ 2.681 & 1.330 & 0.031 & 0.94 \\
		$\beta_{1S}$ = 2.1 &\ 2.176 & 2.568 &\ 0.036 & 0.94 & $\beta_{2S}$ = 2.4 &\ 2.488 & 2.924 & 0.037 & 0.96 \\
		$\rho$ = 0 & -0.012 & 0.386 & -0.012$^*$ & 1.00 & $\sigma_{2}^2$ = 8 &\ 9.001 & 3.157 & 0.125 & 0.95 \\
		$\alpha_{10}$ = -8.4 & -8.585 & 1.491 &\ 0.022 & 0.95 & $\alpha_{20}$ = 1.8 &\ 1.900 & 0.590 & 0.056 & 0.93 \\
		$\alpha_{11}$ = 10.5 & 10.788 & 1.566 &\ 0.027 & 0.93 & $\alpha_{21}$ = 1.5 &\ 1.528 & 0.334 & 0.018 & 0.95 \\ \hline
		\hline
		\multicolumn{10}{|c|}{$n = 100$} \\
		\hline
		Parameters &\ Mean &\ SE &\ RB & CP & Parameters &\ Mean &\ SE &\ RB & CP \\ \hline
		$\beta_{10}$ = 10.2 & 10.380 & 1.091 &\ 0.018 & 0.95 & $\beta_{20}$ = -1.3 & -1.278 & 1.227 & -0.017 & 0.95 \\
		$\beta_{11}$ = -5.6 & -5.682 & 0.447 &\ 0.015 & 0.94 & $\beta_{21}$ = 0.8 &\ 0.816 & 0.120 &\ 0.020 & 0.97 \\
		$\beta_{1C}$ = 2.5 &\ 2.464 & 0.568 & -0.014 & 0.95 & $\beta_{2C}$ = 2.6 &\ 2.591 & 0.841 & -0.004 & 0.95 \\
		$\beta_{1S}$ = 2.1 &\ 2.284 & 1.670 &\ 0.087 & 0.93 & $\beta_{2S}$ = 2.4 &\ 2.422 & 1.985 &\ 0.009 & 0.96 \\
		$\rho$ = 0 &\ 0.001 & 0.378 &\ 0.001$^*$ & 1.00 & $\sigma_{2}^2$ = 8 &\ 8.464 & 2.085 &\ 0.058 & 0.95 \\
		$\alpha_{10}$ = -8.4 & -8.570 & 1.016 &\ 0.020 & 0.94 & $\alpha_{20}$ = 1.8 &\ 1.847 & 0.398 &\ 0.026 & 0.94 \\
		$\alpha_{11}$ = 10.5 & 10.737 & 1.072 &\ 0.023 & 0.94 & $\alpha_{21}$ = 1.5 &\ 1.536 & 0.228 &\ 0.024 & 0.94 \\ \hline
		\hline
		\multicolumn{10}{|c|}{$n = 500$} \\
		\hline
		Parameters &\ Mean &\ SE &\ RB & CP & Parameters &\ Mean &\ SE &\ RB & CP \\ \hline
		$\beta_{10}$ = 10.2 & 10.215 & 0.461 &\ 0.001 & 0.93 & $\beta_{20}$ = -1.3 & -1.296 & 0.515 & -0.003 & 0.96 \\
		$\beta_{11}$ = -5.6 & -5.611 & 0.203 &\ 0.002 & 0.94 & $\beta_{21}$ = 0.8 &\ 0.804 & 0.055 &\ 0.005 & 0.94 \\
		$\beta_{1C}$ = 2.5 &\ 2.496 & 0.222 & -0.001 & 0.95 & $\beta_{2C}$ = 2.6 &\ 2.610 & 0.353 &\ 0.004 & 0.95 \\
		$\beta_{1S}$ = 2.1 &\ 2.165 & 0.683 &\ 0.031 & 0.94 & $\beta_{2S}$ = 2.4 &\ 2.375 & 0.874 & -0.010 & 0.96 \\
		$\rho$ = 0 & -0.003 & 0.293 & -0.003$^*$ & 0.97 & $\sigma_{2}^2$ = 8 &\ 8.121 & 0.891 &\ 0.015 & 0.93 \\
		$\alpha_{10}$ = -8.4 & -8.429 & 0.447 &\ 0.003 & 0.95 & $\alpha_{20}$ = 1.8 &\ 1.813 & 0.176 &\ 0.007 & 0.93 \\
		$\alpha_{11}$ = 10.5 & 10.528 & 0.470 &\ 0.003 & 0.95 & $\alpha_{21}$ = 1.5 &\ 1.501 & 0.100 &\ 0.001 & 0.93 \\ \hline
	\end{tabular}
	\label{TABLE:LCRM:SimStudy-Resp15-Cov0-conjugate}
	\begin{tablenotes}
		\item \: \: \: \: $^*$Bias is given instead of RB.  
	\end{tablenotes}
\end{table}
	
\begin{table}[ht!]
	\caption{\textit{Simulation study for Choice-I with 15\% zeros in $\theta_{X}$ only}}
	\footnotesize
	\centering
	\begin{tabular}{|l| l l l l|l| l l l l|}
		\hline
		\multicolumn{10}{|c|}{$n = 50$} \\
		\hline
		Parameters &\ Mean &\ SE &\ RB & CP & Parameters &\ Mean &\ SE &\ RB & CP \\ \hline
		$\beta_{10}$ = 10.2 & 10.589 & 1.133 &\ 0.038 & 0.93 & $\beta_{20}$ = 5.2 &\ 5.341 & 0.995 & 0.027 & 0.95 \\
		$\beta_{11}$ = -5.6 & -5.764 & 0.607 &\ 0.029 & 0.94 & $\beta_{21}$ = 3.8 &\ 3.942 & 0.426 & 0.037 & 0.93 \\
		$\beta_{1C}$ = 2.5 &\ 2.500 & 0.626 &\ 0.000 & 0.95 & $\beta_{2C}$ = 2.6 &\ 2.687 & 0.967 & 0.034 & 0.95 \\
		$\beta_{1S}$ = 2.1 &\ 2.087 & 1.639 & -0.006 & 0.95 & $\beta_{2S}$ = 2.4 &\ 2.429 & 2.500 & 0.012 & 0.94 \\
		$\rho$ = 0 & -0.068 & 0.448 & -0.068$^*$ & 0.96 & $\sigma_{2}^2$ = 5 &\ 5.437 & 2.098 & 0.087 & 0.96 \\
		$\alpha_{10}$ = -8.4 & -8.635 & 1.690 &\ 0.028 & 0.95 & $\alpha_{20}$ = 1.5 &\ 1.520 & 0.519 & 0.013 & 0.94 \\
		$\alpha_{11}$ = 10.5 & 10.846 & 1.627 &\ 0.033 & 0.95 & $\alpha_{21}$ = -1.2 & -1.228 & 0.272 & 0.023 & 0.95 \\ \hline
		\hline
		\multicolumn{10}{|c|}{$n = 100$} \\
		\hline
		Parameters &\ Mean &\ SE &\ RB & CP & Parameters &\ Mean &\ SE &\ RB & CP \\ \hline
		$\beta_{10}$ = 10.2 & 10.414 & 0.772 &\ 0.021 & 0.92 & $\beta_{20}$ = 5.2 &\ 5.268 & 0.673 & 0.013 & 0.94 \\
		$\beta_{11}$ = -5.6 & -5.709 & 0.432 &\ 0.019 & 0.92 & $\beta_{21}$ = 3.8 &\ 3.885 & 0.298 & 0.022 & 0.94 \\
		$\beta_{1C}$ = 2.5 &\ 2.531 & 0.403 &\ 0.012 & 0.95 & $\beta_{2C}$ = 2.6 &\ 2.702 & 0.649 & 0.039 & 0.96 \\
		$\beta_{1S}$ = 2.1 &\ 2.107 & 1.058 &\ 0.004 & 0.95 & $\beta_{2S}$ = 2.4 &\ 2.476 & 1.650 & 0.032 & 0.97 \\
		$\rho$ = 0 & -0.043 & 0.386 & -0.043$^*$ & 0.95 & $\sigma_{2}^2$ = 5 &\ 5.262 & 1.454 & 0.052 & 0.96 \\
		$\alpha_{10}$ = -8.4 & -8.480 & 1.193 &\ 0.009 & 0.93 & $\alpha_{20}$ = 1.5 &\ 1.524 & 0.374 & 0.016 & 0.91 \\
		$\alpha_{11}$ = 10.5 & 10.618 & 1.126 &\ 0.011 & 0.94 & $\alpha_{21}$ = -1.2 & -1.215 & 0.193 & 0.012 & 0.93 \\ \hline
		\hline
		\multicolumn{10}{|c|}{$n = 500$} \\
		\hline
		Parameters &\ Mean &\ SE &\ RB & CP & Parameters &\ Mean &\ SE &\ RB & CP \\ \hline
		$\beta_{10}$ = 10.2 & 10.241 & 0.323 &\ 0.004 & 0.93 & $\beta_{20}$ = 5.2 &\ 5.218 & 0.294 & 0.003 & 0.94 \\
		$\beta_{11}$ = -5.6 & -5.615 & 0.182 &\ 0.003 & 0.95 & $\beta_{21}$ = 3.8 &\ 3.818 & 0.129 & 0.005 & 0.94 \\
		$\beta_{1C}$ = 2.5 &\ 2.503 & 0.165 &\ 0.001 & 0.95 & $\beta_{2C}$ = 2.6 &\ 2.607 & 0.281 & 0.003 & 0.94 \\
		$\beta_{1S}$ = 2.1 &\ 2.098 & 0.423 & -0.001 & 0.95 & $\beta_{2S}$ = 2.4 &\ 2.405 & 0.705 & 0.002 & 0.95 \\
		$\rho$ = 0 & -0.013 & 0.208 & -0.013$^*$ & 0.95 & $\sigma_{2}^2$ = 5 &\ 5.091 & 0.622 & 0.018 & 0.95 \\
		$\alpha_{10}$ = -8.4 & -8.454 & 0.505 &\ 0.006 & 0.96 & $\alpha_{20}$ = 1.5 &\ 1.504 & 0.156 & 0.003 & 0.96 \\
		$\alpha_{11}$ = 10.5 & 10.552 & 0.490 &\ 0.005 & 0.96 & $\alpha_{21}$ = -1.2 & -1.204 & 0.081 & 0.003 & 0.95 \\ \hline
	\end{tabular}
	\label{TABLE:LCRM:SimStudy-Resp0-Cov15-conjugate}
	\begin{tablenotes}
		\item \: \: \: \: $^*$Bias is given instead of RB.  
	\end{tablenotes}
\end{table}
\begin{table}
	\caption{\textit{Simulation study for Choice-I under without any zero-inflation}}
	\footnotesize
	\centering
	\begin{tabular}{|l| l l l l|l| l l l l|}
		\hline
		\multicolumn{10}{|c|}{$n = 50$} \\
		\hline
		Parameters &\ Mean &\ SE &\ RB & CP & Parameters & Mean &\ SE &\ RB & CP \\ \hline
		$\beta_{10}$ = 5.3 &\ 5.558 & 2.477 & 0.049 & 0.98 & $\beta_{20}$ = 2.5 & 2.652 & 1.498 & 0.061 & 0.95 \\
		$\beta_{11}$ = 4.6 &\ 4.798 & 0.565 & 0.043 & 0.92 & $\beta_{21}$ = 0.8 & 0.817 & 0.260 & 0.022 & 0.94 \\
		$\beta_{1C}$ = 2.5 &\ 2.589 & 1.144 & 0.035 & 0.98 & $\beta_{2C}$ = 2.6 & 2.720 & 0.711 & 0.046 & 0.95 \\
		$\beta_{1S}$ = 2.1 &\ 2.256 & 2.953 & 0.074 & 0.98 & $\beta_{2S}$ = 2.4 & 2.507 & 1.831 & 0.045 & 0.95 \\
		$\rho$ = 0.8 &\ 0.808 & 0.203 & 0.010 & 0.95 & $\sigma_{2}^2$ = 1 & 1.688 & 1.392 & 0.688 & 1.00 \\
		$\alpha_{10}$ = -5.4 & -5.628 & 0.957 & 0.042 & 0.93 & $\alpha_{20}$ = 1.8 & 1.915 & 0.673 & 0.064 & 0.95 \\
		$\alpha_{11}$ = 3.5 &\ 3.652 & 0.565 & 0.044 & 0.93 & $\alpha_{21}$ = 1.5 & 1.557 & 0.385 & 0.038 & 0.95 \\ \hline
		\hline
		\multicolumn{10}{|c|}{$n = 100$} \\
		\hline
		Parameters &\ Mean &\ SE &\ RB & CP & Parameters & Mean &\ SE &\ RB & CP \\ \hline
		$\beta_{10}$ = 5.3 &\ 5.415 & 1.657 & 0.022 & 0.96 & $\beta_{20}$ = 2.5 & 2.573 & 0.989 &\ 0.029 & 0.95 \\
		$\beta_{11}$ = 4.6 &\ 4.695 & 0.372 & 0.021 & 0.92 & $\beta_{21}$ = 0.8 & 0.799 & 0.176 & -0.002 & 0.94 \\
		$\beta_{1C}$ = 2.5 &\ 2.543 & 0.743 & 0.017 & 0.97 & $\beta_{2C}$ = 2.6 & 2.657 & 0.478 &\ 0.022 & 0.96 \\
		$\beta_{1S}$ = 2.1 &\ 2.203 & 1.999 & 0.049 & 0.95 & $\beta_{2S}$ = 2.4 & 2.466 & 1.202 &\ 0.027 & 0.96 \\
		$\rho$ = 0.8 &\ 0.814 & 0.151 & 0.018 & 0.96 & $\sigma_{2}^2$ = 1 & 1.434 & 0.929 &\ 0.434 & 0.99 \\
		$\alpha_{10}$ = -5.4 & -5.557 & 0.663 & 0.029 & 0.94 & $\alpha_{20}$ = 1.8 & 1.844 & 0.466 &\ 0.025 & 0.94 \\
		$\alpha_{11}$ = 3.5 &\ 3.597 & 0.390 & 0.028 & 0.93 & $\alpha_{21}$ = 1.5 & 1.536 & 0.269 &\ 0.024 & 0.94 \\ \hline
		\hline
		\multicolumn{10}{|c|}{$n = 500$} \\
		\hline
		Parameters &\ Mean &\ SE &\ RB & CP & Parameters & Mean &\ SE &\ RB & CP \\ \hline
		$\beta_{10}$ = 5.3 &\ 5.326 & 0.670 & 0.005 & 0.94 & $\beta_{20}$ = 2.5 & 2.525 & 0.415 &\ 0.010 & 0.96 \\
		$\beta_{11}$ = 4.6 &\ 4.618 & 0.155 & 0.004 & 0.96 & $\beta_{21}$ = 0.8 & 0.799 & 0.073 & -0.002 & 0.96 \\
		$\beta_{1C}$ = 2.5 &\ 2.524 & 0.299 & 0.010 & 0.96 & $\beta_{2C}$ = 2.6 & 2.617 & 0.200 &\ 0.006 & 0.96 \\
		$\beta_{1S}$ = 2.1 &\ 2.110 & 0.812 & 0.005 & 0.95 & $\beta_{2S}$ = 2.4 & 2.396 & 0.509 & -0.002 & 0.95 \\
		$\rho$ = 0.8 &\ 0.802 & 0.073 & 0.002 & 0.98 & $\sigma_{2}^2$ = 1 & 1.047 & 0.178 &\ 0.047 & 0.97 \\
		$\alpha_{10}$ = -5.4 & -5.426 & 0.282 & 0.005 & 0.96 & $\alpha_{20}$ = 1.8 & 1.815 & 0.202 &\ 0.009 & 0.95 \\
		$\alpha_{11}$ = 3.5 &\ 3.517 & 0.166 & 0.005 & 0.96 & $\alpha_{21}$ = 1.5 & 1.506 & 0.116 &\ 0.004 & 0.95 \\ \hline
	\end{tabular}
	\label{TABLE:LCRM:SimStudy-Resp0-Cov0-conjugate}
\end{table}

\subsection{Model comparison} \label{SEC:LCRM:Model-Comparison}
Three distinct models are taken into consideration for comparison: Model-I, Model-II and Model-III. Model-I, given by (\ref{EQN:LCRM:ZeroInflated-Stage-I}) and (\ref{EQN:LCRM:ZeroInflated-Stage-II}), considers zero-inflation in both the longitudinal circular response and the circular covariate, whereas Model-II does not consider zero-inflation in both of them, and Model-III considers zero-inflation only in the longitudinal circular response, but not in the circular covariate. At first, we evaluate how well Model-I, II and III are estimating the parameters when the data is generated from Model-I. For this case, we consider $n = 200$ and $\delYtwo = \delXtwo = 0.14$ radians ($8^\circ$). The conjugate priors are used with the same hyperparameters as discussed before. In Table \ref{TABLE:LCRM:Model-Comparison-Conjugate}, we provide the averages of the parameter estimates over 500 replications. One can observe that RBs of all the parameters are lower for both the stages in Model-I with around 95\% CPs. However, for most of the parameters, the biases are higher and the CPs are lower in Model-II as well as in Model-III compared to Model-I. Note that the estimates associated with the regression of $\tXi$ on $v_{1i}$ based on Model-II and III, where zero-inflation is not considered in the circular covariate, are comparatively more biased than Model-I which incorporates zero-inflation in the circular covariate. In the presence of zero-inflation in the longitudinal circular response and the circular covariate, the results from simulation study in Table \ref{TABLE:LCRM:Model-Comparison-Conjugate} indicate that both Model-II and Model-III might not be suitable to represent the underlying relationship in the data.
	
\begin{table}
	\caption{\textit{Simulation study with 45\% zeros in $\theta_{Y}$ and 55\% zeros in $\theta_{X}$}}
	\footnotesize
	\centering
	\begin{tabular}{|l| l l l l|l| l l l l|}
		\hline
		\multicolumn{10}{|c|}{Model-I} \\
		\hline
		Parameters &\ Mean &\ SE &\ RB & CP & Parameters &\ Mean &\ SE &\ RB & CP \\ \hline
		$\beta_{10}$ = 8.8 &\ 8.760 & 3.489 & -0.005 & 0.96 & $\beta_{20}$ = -1.6 & -1.606 & 3.662 &\ 0.004 & 0.97 \\
		$\beta_{11}$ = 5.2 &\ 5.269 & 0.348 &\ 0.013 & 0.95 & $\beta_{21}$ = 0.8 &\ 0.813 & 0.106 &\ 0.017 & 0.94 \\
		$\beta_{1C}$ = 1.5 &\ 1.661 & 3.564 &\ 0.107 & 0.95 & $\beta_{2C}$ = 1.2 &\ 1.175 & 3.777 & -0.021 & 0.97 \\
		$\beta_{1S}$ = 1.2 &\ 1.046 & 1.521 & -0.129 & 0.95 & $\beta_{2S}$ = 1.8 &\ 1.765 & 1.630 & -0.020 & 0.95 \\
		$\rho$ = 0.8 &\ 0.661 & 0.228 & -0.174 & 0.99 & $\sigma_{2}^2$ = 5 &\ 5.249 & 1.070 &\ 0.050 & 0.97 \\
		$\alpha_{10}$ = 3.4 &\ 3.465 & 0.649 &\ 0.019 & 0.95 & $\alpha_{20}$ = -1.2 & -1.203 & 0.256 &\ 0.002 & 0.95 \\
		$\alpha_{11}$ = 4.5 &\ 4.567 & 0.606 &\ 0.015 & 0.95 & $\alpha_{21}$ = 1.3 &\ 1.314 & 0.164 &\ 0.011 & 0.95 \\ \hline
		\hline
		\multicolumn{10}{|c|}{Model-II} \\
		\hline
		Parameters &\ Mean &\ SE &\ RB & CP & Parameters &\ Mean &\ SE &\ RB & CP \\ \hline
		$\beta_{10}$ = 8.8 &\ 8.165 & 3.168 & -0.072 & 0.96 & $\beta_{20}$ = -1.6 & -1.779 & 3.162 &\ 0.112 & 0.94 \\
		$\beta_{11}$ = 5.2 &\ 4.770 & 0.490 & -0.083 & 0.43 & $\beta_{21}$ = 0.8 &\ 0.638 & 0.175 & -0.202 & 0.25 \\
		$\beta_{1C}$ = 1.5 &\ 1.331 & 3.160 & -0.113 & 0.96 & $\beta_{2C}$ = 1.2 &\ 1.445 & 3.240 &\ 0.205 & 0.94 \\
		$\beta_{1S}$ = 1.2 &\ 0.972 & 1.473 & -0.190 & 0.96 & $\beta_{2S}$ = 1.8 &\ 1.335 & 1.469 & -0.258 & 0.90 \\
		$\rho$ = 0.8 &\ 0.692 & 0.201 & -0.135 & 0.98 & $\sigma_{2}^2$ = 5 &\ 3.837 & 1.430 & -0.233 & 0.56 \\
		$\alpha_{10}$ = 3.4 &\ 3.810 & 0.797 &\ 0.120 & 0.84 & $\alpha_{20}$ = -1.2 & -0.938 & 0.334 & -0.218 & 0.67 \\
		$\alpha_{11}$ = 4.5 &\ 2.871 & 1.655 & -0.362 & 0.00 & $\alpha_{21}$ = 1.3 &\ 0.833 & 0.476 & -0.360 & 0.00 \\ \hline
		\hline
		\multicolumn{10}{|c|}{Model-III} \\
		\hline
		Parameters &\ Mean &\ SE &\ RB & CP & Parameters &\ Mean &\ SE &\ RB & CP \\ \hline
		$\beta_{10}$ = 8.8 &\ 8.639 & 3.293 & -0.018 & 0.96 & $\beta_{20}$ = -1.6 & -1.942 & 3.498 &\ 0.214 & 0.97 \\
		$\beta_{11}$ = 5.2 &\ 5.278 & 0.351 &\ 0.015 & 0.94 & $\beta_{21}$ = 0.8 &\ 0.815 & 0.102 &\ 0.019 & 0.95 \\
		$\beta_{1C}$ = 1.5 &\ 1.834 & 3.346 &\ 0.223 & 0.95 & $\beta_{2C}$ = 1.2 &\ 1.574 & 3.581 &\ 0.312 & 0.97 \\
		$\beta_{1S}$ = 1.2 &\ 0.945 & 1.558 & -0.212 & 0.96 & $\beta_{2S}$ = 1.8 &\ 1.675 & 1.611 & -0.070 & 0.95 \\
		$\rho$ = 0.8 &\ 0.681 & 0.217 & -0.149 & 0.99 & $\sigma_{2}^2$ = 5 &\ 5.301 & 1.143 &\ 0.060 & 0.96 \\
		$\alpha_{10}$ = 3.4 &\ 3.877 & 0.842 &\ 0.140 & 0.81 & $\alpha_{20}$ = -1.2 & -0.955 & 0.330 & -0.204 & 0.67 \\
		$\alpha_{11}$ = 4.5 &\ 2.836 & 1.690 & -0.370 & 0.00 & $\alpha_{21}$ = 1.3 &\ 0.839 & 0.469 & -0.354 & 0.00 \\ \hline
	\end{tabular}
	\label{TABLE:LCRM:Model-Comparison-Conjugate}
\end{table}

\subsection{Sensitivity analysis} \label{SEC:LCRM:Sensitivity-Analysis}
\subsubsection{Sensitivity to model misspecification} \label{SEC:LCRM:Model-Misspecification}
We consider a misspecified model to perform the sensitivity analysis of our proposed method. In this context, at first, $\tYij$ and $\tXi$ are generated and censored as previously discussed. After that, they are contaminated with random zeros with probability $\eta_Y$ and $\eta_X$. Here, we consider $n = 200$, $\delYtwo = \delXtwo = 0.14$ radians ($8^\circ$) and the same choice of hyperparameters as discussed in Section \ref{SEC:LCRM:Model-Comparison}. It is to be noted that, when $\eta_{Y} = \eta_{X} = 0$, this contaminated model reduces to the proposed zero-inflated LCRM model, given by (\ref{EQN:LCRM:ZeroInflated-Stage-I}) and (\ref{EQN:LCRM:ZeroInflated-Stage-II}). Subsequently, we provide the average estimates of the posterior mean, RB and SE based on 500 replications and compare the three models, Model-I, II and III, in terms of RB and CP. We report the simulation results under misspecification in Table \ref{TABLE:LCRM:Misspecified-Model-etaY=etaX=0.05} and \ref{TABLE:LCRM:Misspecified-Model-etaY=etaX=0.075} for $\eta_{Y} = \eta_{X} = 0.05$ and $\eta_{Y} = \eta_{X} = 0.075$, respectively. The results demonstrate that the performance of the estimates declines with respect to RB for all the models with increasing the proportion of contamination. In comparison to Model-II and III, the biases for majority of the parameters are significantly lower in Model-I. Specifically, for most of the parameters, the CPs are substantially higher (around 93\%) in Model-I than the other two models. In particular, according to the simulation results under Model-II and III, the CPs for most of the parameters related to the Stage-II regression are extremely low, often close to zero. For example, the CPs for $\alpha_{11}$ and $\alpha_{21}$ are approximately zero under Model-II and Model-III in Table \ref{TABLE:LCRM:Misspecified-Model-etaY=etaX=0.05} and \ref{TABLE:LCRM:Misspecified-Model-etaY=etaX=0.075}. The results indicate that the proposed model performs much better than Model-II and III, even if the data is generated from the misspecified model.

\begin{table}
	\caption{\textit{Simulation study under misspecified model with $\eta_{Y} = \eta_{X} = 0.05$}}
	\footnotesize
	\centering
	\begin{tabular}{|l| l l l l|l| l l l l|}
		\hline
		\multicolumn{10}{|c|}{Model-I} \\
		\hline
		Parameters &\ Mean &\ SE &\ RB & CP & Parameters &\ Mean &\ SE &\ RB & CP \\ \hline
		$\beta_{10}$ = 8.3 &\ 7.894 & 3.650 & -0.049 & 0.97 & $\beta_{20}$ = -1.3 & -1.463 & 2.419 &\ 0.125 & 0.94 \\
		$\beta_{11}$ = 5.2 &\ 5.038 & 0.398 & -0.031 & 0.87 & $\beta_{21}$ = 0.8 &\ 0.757 & 0.102 & -0.053 & 0.90 \\
		$\beta_{1C}$ = 1.5 &\ 1.837 & 3.741 &\ 0.225 & 0.98 & $\beta_{2C}$ = 2.1 &\ 2.248 & 2.486 &\ 0.070 & 0.94 \\
		$\beta_{1S}$ = 1.2 &\ 0.984 & 1.823 & -0.180 & 0.95 & $\beta_{2S}$ = 2.4 &\ 2.115 & 0.918 & -0.119 & 0.93 \\
		$\rho$ = 0.25 &\ 0.244 & 0.254 & -0.024 & 0.93 & $\sigma_{2}^2$ = 1 &\ 0.943 & 0.252 & -0.057 & 0.93 \\
		$\alpha_{10}$ = 3.4 &\ 3.635 & 0.707 &\ 0.069 & 0.92 & $\alpha_{20}$ = -1.2 & -1.166 & 0.267 & -0.028 & 0.93 \\
		$\alpha_{11}$ = 4.5 &\ 4.444 & 0.623 & -0.012 & 0.92 & $\alpha_{21}$ = 1.3 &\ 1.269 & 0.170 & -0.023 & 0.93 \\ \hline
		\hline
		\multicolumn{10}{|c|}{Model-II} \\
		\hline
		Parameters &\ Mean &\ SE &\ RB & CP & Parameters &\ Mean &\ SE &\ RB & CP \\ \hline
		$\beta_{10}$ = 8.3 &\ 6.377 & 3.785 & -0.232 & 0.89 & $\beta_{20}$ = -1.3 & -1.335 & 1.895 &\ 0.027 & 0.94 \\
		$\beta_{11}$ = 5.2 &\ 4.341 & 0.889 & -0.165 & 0.04 & $\beta_{21}$ = 0.8 &\ 0.454 & 0.350 & -0.432 & 0.00 \\
		$\beta_{1C}$ = 1.5 &\ 2.439 & 3.505 &\ 0.626 & 0.98 & $\beta_{2C}$ = 2.1 &\ 1.827 & 1.962 & -0.130 & 0.91 \\
		$\beta_{1S}$ = 1.2 &\ 0.515 & 1.875 & -0.571 & 0.94 & $\beta_{2S}$ = 2.4 &\ 1.603 & 1.023 & -0.332 & 0.65 \\
		$\rho$ = 0.25 & -0.225 & 0.517 & -1.899 & 0.29 & $\sigma_{2}^2$ = 1 &\ 0.619 & 0.409 & -0.381 & 0.20 \\
		$\alpha_{10}$ = 3.4 &\ 4.005 & 0.916 &\ 0.178 & 0.77 & $\alpha_{20}$ = -1.2 & -0.895 & 0.365 & -0.254 & 0.57 \\
		$\alpha_{11}$ = 4.5 &\ 2.760 & 1.764 & -0.387 & 0.00 & $\alpha_{21}$ = 1.3 &\ 0.791 & 0.516 & -0.392 & 0.00 \\ \hline
		\hline
		\multicolumn{10}{|c|}{Model-III} \\
		\hline
		Parameters &\ Mean &\ SE &\ RB & CP & Parameters &\ Mean &\ SE &\ RB & CP \\ \hline
		$\beta_{10}$ = 8.3 &\ 7.764 & 3.626 & -0.065 & 0.96 & $\beta_{20}$ = -1.3 & -1.774 & 2.354 &\ 0.365 & 0.95 \\
		$\beta_{11}$ = 5.2 &\ 5.017 & 0.398 & -0.035 & 0.87 & $\beta_{21}$ = 0.8 &\ 0.755 & 0.106 & -0.056 & 0.88 \\
		$\beta_{1C}$ = 1.5 &\ 1.960 & 3.692 &\ 0.307 & 0.96 & $\beta_{2C}$ = 2.1 &\ 2.618 & 2.412 &\ 0.247 & 0.95 \\
		$\beta_{1S}$ = 1.2 &\ 0.937 & 1.826 & -0.219 & 0.94 & $\beta_{2S}$ = 2.4 &\ 2.034 & 0.906 & -0.153 & 0.91 \\
		$\rho$ = 0.25 &\ 0.246 & 0.267 & -0.016 & 0.91 & $\sigma_{2}^2$ = 1 &\ 0.947 & 0.259 & -0.053 & 0.92 \\
		$\alpha_{10}$ = 3.4 &\ 4.011 & 0.954 &\ 0.180 & 0.74 & $\alpha_{20}$ = -1.2 & -0.882 & 0.377 & -0.265 & 0.53 \\
		$\alpha_{11}$ = 4.5 &\ 2.755 & 1.770 & -0.388 & 0.00 & $\alpha_{21}$ = 1.3 &\ 0.782 & 0.525 & -0.399 & 0.00 \\ \hline
	\end{tabular}
	\label{TABLE:LCRM:Misspecified-Model-etaY=etaX=0.05}
\end{table}
    
\begin{table}
	\caption{\textit{Simulation study under misspecified model with $\eta_{Y} = \eta_{X} = 0.075$}}
	\footnotesize
	\centering
	\begin{tabular}{|l| l l l l|l| l l l l|}
		\hline
		\multicolumn{10}{|c|}{Model-I} \\
		\hline
		Parameters &\ Mean &\ SE &\ RB & CP & Parameters &\ Mean &\ SE &\ RB & CP \\ \hline
		$\beta_{10}$ = 8.3 &\ 7.716 & 3.558 & -0.070 & 0.97 & $\beta_{20}$ = -1.3 & -1.323 & 2.283 &\ 0.018 & 0.95 \\
		$\beta_{11}$ = 5.2 &\ 4.937 & 0.444 & -0.051 & 0.79 & $\beta_{21}$ = 0.8 &\ 0.734 & 0.114 & -0.083 & 0.83 \\
		$\beta_{1C}$ = 1.5 &\ 1.936 & 3.631 &\ 0.290 & 0.98 & $\beta_{2C}$ = 2.1 &\ 2.080 & 2.350 & -0.010 & 0.95 \\
		$\beta_{1S}$ = 1.2 &\ 0.955 & 1.848 & -0.204 & 0.94 & $\beta_{2S}$ = 2.4 &\ 2.046 & 0.899 & -0.148 & 0.93 \\
		$\rho$ = 0.25 &\ 0.246 & 0.250 & -0.015 & 0.93 & $\sigma_{2}^2$ = 1 &\ 0.887 & 0.257 & -0.113 & 0.86 \\
		$\alpha_{10}$ = 3.4 &\ 3.659 & 0.707 &\ 0.076 & 0.91 & $\alpha_{20}$ = -1.2 & -1.167 & 0.269 & -0.027 & 0.94 \\
		$\alpha_{11}$ = 4.5 &\ 4.423 & 0.626 & -0.017 & 0.93 & $\alpha_{21}$ = 1.3 &\ 1.260 & 0.174 & -0.031 & 0.92 \\ \hline
		\hline
		\multicolumn{10}{|c|}{Model-II} \\
		\hline
		Parameters &\ Mean &\ SE &\ RB & CP & Parameters &\ Mean &\ SE &\ RB & CP  \\ \hline
		$\beta_{10}$ = 8.3 &\ 6.347 & 3.967 & -0.235 & 0.92 & $\beta_{20}$ = -1.3 & -1.285 & 1.970 & -0.011 & 0.93  \\
		$\beta_{11}$ = 5.2 &\ 4.274 & 0.954 & -0.178 & 0.01 & $\beta_{21}$ = 0.8 &\ 0.434 & 0.370 & -0.458 & 0.00  \\
		$\beta_{1C}$ = 1.5 &\ 2.471 & 3.672 &\ 0.647 & 0.97 & $\beta_{2C}$ = 2.1 &\ 1.763 & 2.056 & -0.160 & 0.87  \\
		$\beta_{1S}$ = 1.2 &\ 0.515 & 1.968 & -0.571 & 0.94 & $\beta_{2S}$ = 2.4 &\ 1.525 & 1.075 & -0.365 & 0.57  \\
		$\rho$ = 0.25 & -0.256 & 0.544 & -2.025 & 0.21 & $\sigma_{2}^2$ = 1 &\ 0.560 & 0.460 & -0.440 & 0.10  \\
		$\alpha_{10}$ = 3.4 &\ 3.999 & 0.914 &\ 0.176 & 0.78 & $\alpha_{20}$ = -1.2 & -0.850 & 0.401 & -0.291 & 0.45  \\
		$\alpha_{11}$ = 4.5 &\ 2.736 & 1.789 & -0.392 & 0.00 & $\alpha_{21}$ = 1.3 &\ 0.758 & 0.548 & -0.417 & 0.00 \\ \hline
		\hline
		\multicolumn{10}{|c|}{Model-III} \\
		\hline
		Parameters &\ Mean &\ SE &\ RB & CP & Parameters &\ Mean &\ SE &\ RB & CP \\ \hline
		$\beta_{10}$ = 8.3 &\ 7.865 & 3.679 & -0.052 & 0.97 & $\beta_{20}$ = -1.3 & -1.586 & 2.391 &\ 0.220 & 0.92 \\
		$\beta_{11}$ = 5.2 &\ 4.920 & 0.455 & -0.054 & 0.77 & $\beta_{21}$ = 0.8 &\ 0.736 & 0.115 & -0.080 & 0.79 \\
		$\beta_{1C}$ = 1.5 &\ 1.784 & 3.735 &\ 0.189 & 0.98 & $\beta_{2C}$ = 2.1 &\ 2.403 & 2.445 &\ 0.144 & 0.92 \\
		$\beta_{1S}$ = 1.2 &\ 1.096 & 1.838 & -0.087 & 0.95 & $\beta_{2S}$ = 2.4 &\ 1.993 & 0.934 & -0.170 & 0.88 \\
		$\rho$ = 0.25 &\ 0.225 & 0.258 & -0.100 & 0.93 & $\sigma_{2}^2$ = 1 &\ 0.872 & 0.267 & -0.128 & 0.84 \\
		$\alpha_{10}$ = 3.4 &\ 4.066 & 0.963 &\ 0.196 & 0.75 & $\alpha_{20}$ = -1.2 & -0.852 & 0.401 & -0.290 & 0.46 \\
		$\alpha_{11}$ = 4.5 &\ 2.698 & 1.827 & -0.401 & 0.00 & $\alpha_{21}$ = 1.3 &\ 0.754 & 0.552 & -0.420 & 0.00 \\ \hline
	\end{tabular}
	\label{TABLE:LCRM:Misspecified-Model-etaY=etaX=0.075}
\end{table}

\subsubsection{Sensitivity to prior specifications}\label{SEC:LCRM:Prior-Sensitivity}
We conduct a sensitivity analysis with respect to two different prior choices. We denote the two choices by Choice-I and Choice-II, where Choice-I accounts for the same hyperparameters as discussed in Section \ref{SEC:LCRM:Simulations}, and Choice-II accounts for $\BMu_{\bet_1} = \BMu_{\bet_2} = \boldsymbol{0}$, $\BSig_{\bet_1} = \BSig_{\bet_2} = 1000\BI_4$, $\BMu_{\alp_1} = \BMu_{\alp_2} = \boldsymbol{0}$, $\BSig_{\alp_1} = \BSig_{\alp_2} = 1000\BI_2$, $\lambda_0 = 1$, $\nu_0 = 0.01$, and $\kappa_0 = 0.01$. The average estimates of the parameters for Choice-I are already given in Tables \ref{TABLE:LCRM:SimStudy-Resp10-Cov10-conjugate}-\ref{TABLE:LCRM:SimStudy-Resp0-Cov0-conjugate}. For Choice-II, we generate the same number of posterior samples with the same burn-in period as discussed before. Then, the posterior mean, SE, and RB are calculated based on every 10th iteration. In this case, we also repeat the whole procedure 500 times and report the average estimates of the parameters together with their CPs in Tables S1-S5 (see Section D in the Supplementary File). When $n = 50$, we observe that the biases are slightly higher for most of the parameters under Choice-II (Tables S1-S5) compared to Choice-I (Tables \ref{TABLE:LCRM:SimStudy-Resp10-Cov10-conjugate}-\ref{TABLE:LCRM:SimStudy-Resp0-Cov0-conjugate}). For instance, the RB for $\beta_{1S}$ is equal to 0.025 in Table \ref{TABLE:LCRM:SimStudy-Resp10-Cov10-conjugate}, while it is equal to 0.046 in Table S1. However, the CPs are approximately 95\% for most of the parameters under both the choices of hyperparameters. Thus, both the prior specifications provide similar results with respect to RB and CP in most of the cases.

\subsection{Computation time} \label{SEC:LCRM:Computational-Time}
We compare the runtime of Model-I, II, and III for each dataset in Table \ref{TABLE:LCRM:Run-Time} using the simulation settings indicated in Tables \ref{TABLE:LCRM:Model-Comparison-Conjugate}, \ref{TABLE:LCRM:Misspecified-Model-etaY=etaX=0.05}, and \ref{TABLE:LCRM:Misspecified-Model-etaY=etaX=0.075}. We run our codes in a computer with 8 GB RAM and Intel Xeon(R) CPU E5-2640 v2\@2.00GHz processor. The computation time increases as the proportion of zeros in Model-I and Model-III increases. Also, Model-I takes more computing time than Model-III, while Model-III takes longer to compute than Model-II. This is because of the generation of latent variables in the presence of zeros for both the longitudinal circular response and the circular covariate in Model-I, and only for the longitudinal circular response in Model-III. Same can be noted for the simulation settings that correspond to Tables \ref{TABLE:LCRM:SimStudy-Resp10-Cov10-conjugate}-\ref{TABLE:LCRM:SimStudy-Resp0-Cov0-conjugate}. Moreover, all the models take less than 3 minutes, even with considerable amount of zeros.
\begin{table}[!ht]
	\centering
	\caption{\textit{Comparison of average computing time in minutes}}
	\begin{tabular}{|l|l|l|l|}
		\hline
		~ & Model-I & Model-II & Model-III \\ \hline
		Table \ref{TABLE:LCRM:Model-Comparison-Conjugate} &\ 2.326 &\ 1.306 &\ 1.948 \\
		Table \ref{TABLE:LCRM:Misspecified-Model-etaY=etaX=0.05} &\ 2.406 &\ 1.367 &\ 2.196 \\
		Table \ref{TABLE:LCRM:Misspecified-Model-etaY=etaX=0.075} &\ 2.445 &\ 1.395 &\ 2.249 \\
		\hline
	\end{tabular}
	\label{TABLE:LCRM:Run-Time}
\end{table}

\section{Analysis of astigmatism data} \label{SEC:LCRM:Analysis-Astigmatism-Data}
We consider a dataset on cataract surgery conducted at the Disha Eye Hospital and Research Center, Barrackpore, West Bengal, India, from 2008 to 2010. A total of 56 patients were included in this study and treated with either SICS or PECS. They were subsequently followed up for three months, and the measurements on the axes of astigmatism were taken before the surgery and on the 1st, 7th, 30th, and 90th day post-surgery. The intensity of astigmatism for each of the patients is recorded before and after the 1st day of surgery, denoted by $I_0$ and $I_1$, respectively. In addition, the age and gender of the patients are also available. See \cite{Bakshi2010} for further details of the study. As discussed in Section \ref{SEC:LCRM:Introduction}, the main purpose of our analysis is to compare the recovery process of the patients treated with two different surgical procedures. It is also important for the medical practitioners to understand the recovery process of the patients based on their demographic profiles. 	
        
For this purpose, we first model the axis of astigmatism ($\theta_{Y}$) measured at the 7th, 30th, and 90th day after the surgery with the vector of linear covariates that include \textit{Age}, \textit{Gender}, \textit{Surgery}, and $I_1$. Here, \textit{Gender} and \textit{Surgery} are binary covariates represented by indicator variables. In this context, the females and males are coded as 1 and 0, respectively, and the surgery type PECS and SICS are indicated by 1 and 0, respectively. To understand the improvement of the patients over time, we further incorporate two dummy variables as covariates, denoted by $t_1$ and $t_2$, in Stage-I of our proposed model following \citet[Ch-4, pp. 56-75]{Twisk2023}. Note that $(t_1, t_2)$ takes values (0,0), (1,0), and (1,1) corresponding to 7th, 30th, and 90th day after the surgery, respectively. To understand the recovery process of the patients compared to the very 1st day of surgery, the axis of astigmatism on that day is also included as a circular covariate ($\theta_{X}$). Among the 56 individuals, there were $33.93\%$ zeros in the longitudinal circular response and $35.71\%$ zeros in the circular covariate. The dataset also contains $I_0$ and the measurement on the axis of astigmatism before surgery ($\theta_{V}$), which are considered as the linear and the circular instrumental variables, respectively. Since there are large proportion of zeros in the circular covariate, we model $\theta_{X}$ with covariates as $I_0$ and $\theta_{V}$ in Stage-II. Therefore, the aforementioned two-stage model is given by

\begin{align*}
     \text{Stage-I:} \quad Y_{1ij}^{*} = &\beta_{10} + \beta_{11} \textit{Age}_i + \beta_{12} \textit{Gender}_i + \beta_{13} \textit{Surgery}_i + \beta_{14} t_{1_{ij}} + \beta_{15} t_{2_{ij}} + \beta_{16} I_{1_i} \\ 
     &+ \beta_{1C} \cos{(\tXi^*)} + \beta_{1S} \sin{(\tXi^*)} + b_{1i} +  \varepsilon_{Y_{1ij}}, \\
     Y_{2ij}^{*} = &\beta_{20} + \beta_{21} \textit{Age}_i + \beta_{22} \textit{Gender}_i + \beta_{23} \textit{Surgery}_i + \beta_{24} t_{1_{ij}} + \beta_{25} t_{2_{ij}} + \beta_{26} I_{1_i} \\
     &+ \beta_{2C} \cos{(\tXi^*)} + \beta_{2S} \sin{(\tXi^*)} + b_{2i} + \varepsilon_{Y_{2ij}}, \\
     &\qquad \qquad \qquad \qquad \text{ and } \\
     \text{Stage-II:} \quad X_{1i}^{*} = &\alpha_{10} + \alpha_{11} I_{0_i} + \alpha_{1C} \cos{(\tVi)} + \alpha_{1S} \sin{(\tVi)} + \varepsilon_{X_{1i}}, \\
     X_{2i}^{*} = &\alpha_{20} + \alpha_{21} I_{0_i} + \alpha_{2C} \cos{(\tVi)} + \alpha_{2S} \sin{(\tVi)} + \varepsilon_{X_{2i}},
\end{align*}
for $i = 1, 2,\ldots, 56$ and $j = 1, 2, 3$.

For the analysis, we take $\delYtwo = \delXtwo = 0.035$ radians ($2^{\circ}$) since the original axes of astigmatism are censored within the interval (-$2^{\circ}, 2^{\circ}$) (see Section \ref{SEC:LCRM:Challenges-Circular-Regression}). We draw 0.7 million samples from the posterior distributions of the associated model parameters using the proposed Gibbs sampling algorithm in Section \ref{SEC:LCRM:Inferencial-Methodology}. Consequently, the posterior mean and SD of the parameters are calculated based on every 10th iterate discarding the first 0.15 million  iterations as burn-in. The convergence of the chains is monitored graphically using trace plots (see Figures S1 and S2 in Section F of the Supplementary File) and Geweke's diagnostic test (see Table S6 in Section E of the Supplementary File). The summary of the posterior estimates are provided in Table S6 in Section E of the Supplementary File.
\begin{figure}[ht!]
    \centering
    \begin{subfigure}{\linewidth}
        \centering
        \includegraphics[width=0.48\textwidth]{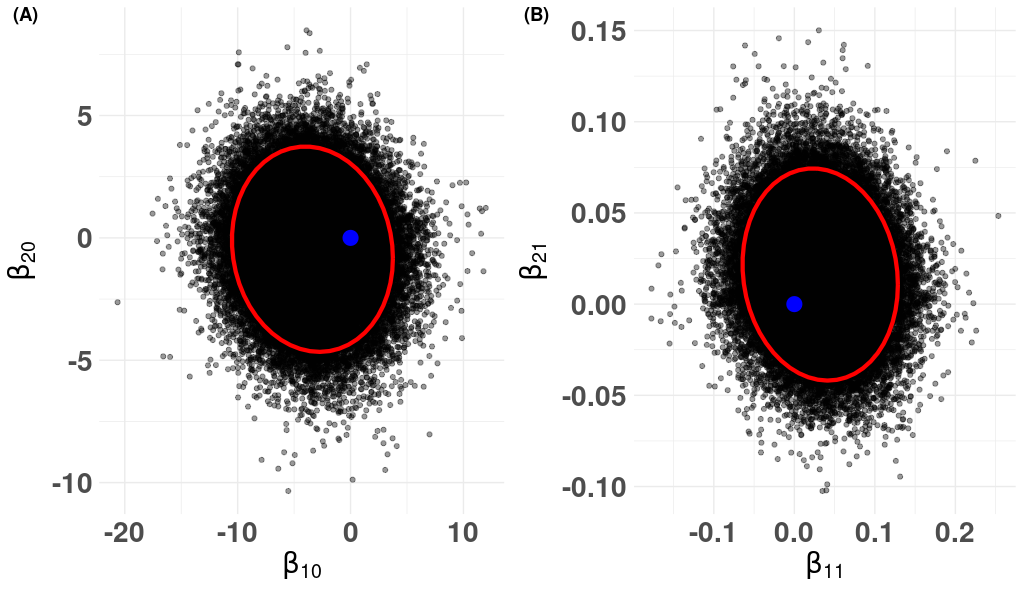}
        \includegraphics[width=0.48\textwidth]{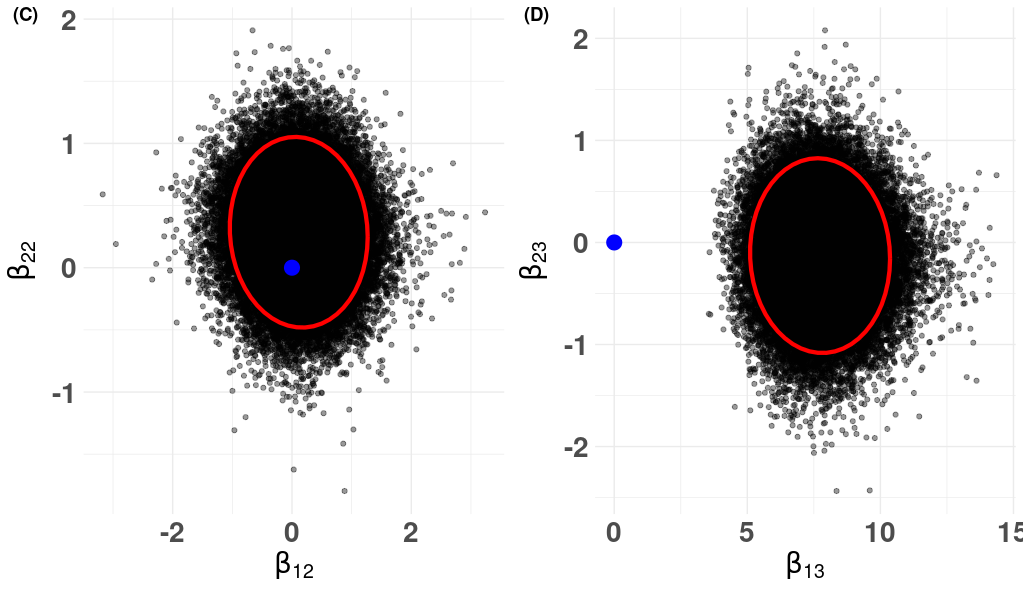}
    \end{subfigure}
    \begin{subfigure}{\linewidth}
        \centering
        \includegraphics[width=0.48\textwidth]{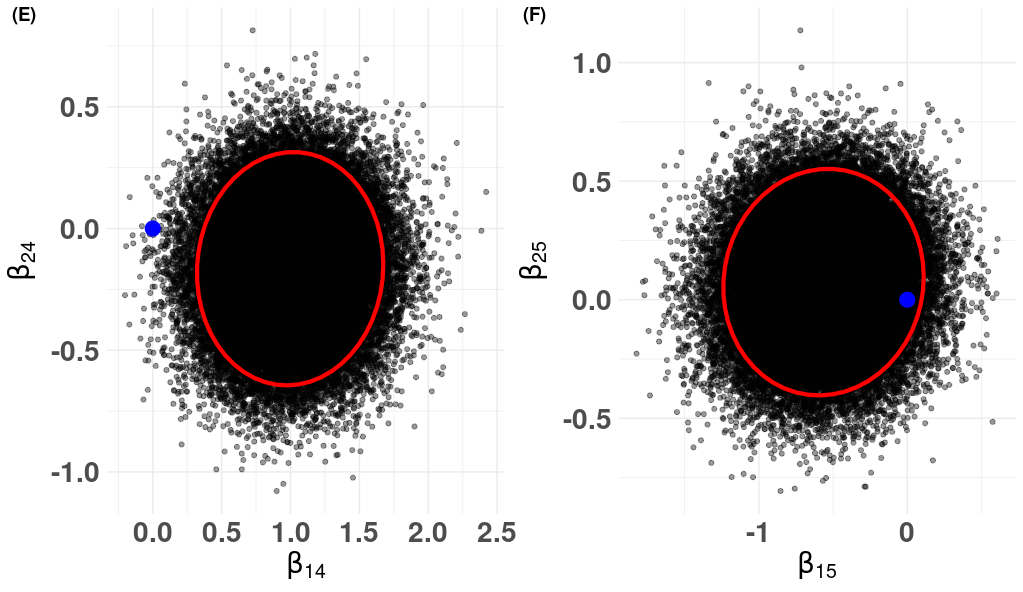}
        \includegraphics[width=0.48\textwidth]{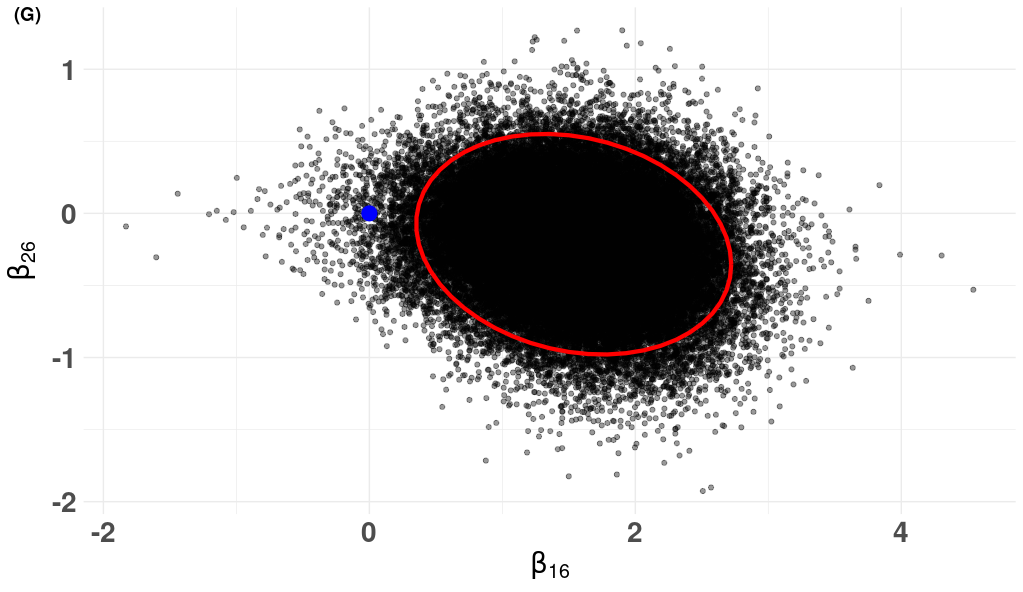}
    \end{subfigure}
    \caption{Confidence ellipses at 95\% level depicting the significance of (\textbf{A}) intercept, (\textbf{B}) \textit{Age}, (\textbf{C}) \textit{Gender}, (\textbf{D}) \textit{Surgery}, (\textbf{E}) $t_1$, (\textbf{F}) $t_2$, and (\textbf{G}) the intensity of astigmatism on the 1st day post-surgery at Stage-I, where `$\color{blue}{\bullet}$' indicates (0,0) position.}
    \label{FIG:LCRM:Significance-testing-plots-Stage-I}
\end{figure}

\begin{figure}[ht!]
    \begin{subfigure}{\linewidth}
        \centering
        \includegraphics[width=0.55\textwidth]{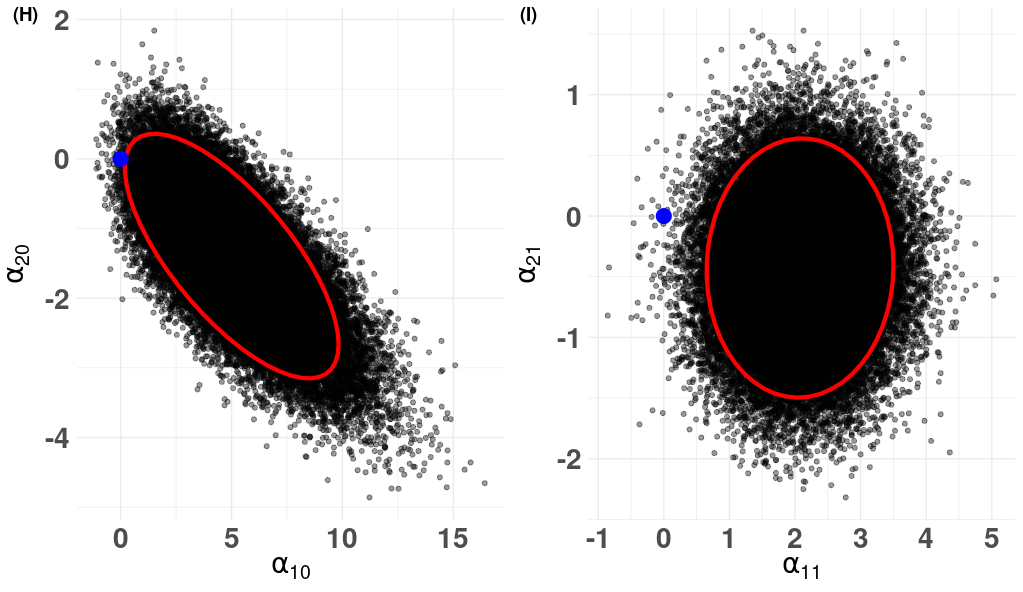}
    \end{subfigure}
    \caption{Confidence ellipses at 95\% level depicting the significance of (\textbf{H}) intercept and (\textbf{I}) the intensity of astigmatism before surgery at Stage-II, where `$\color{blue}{\bullet}$' indicates (0,0) position.}
    \label{FIG:LCRM:Significance-testing-plots-Stage-II}
\end{figure}

\begin{figure}
    \centering
    \begin{minipage}{0.49\textwidth}
        \centering
        \includegraphics[width=\linewidth]{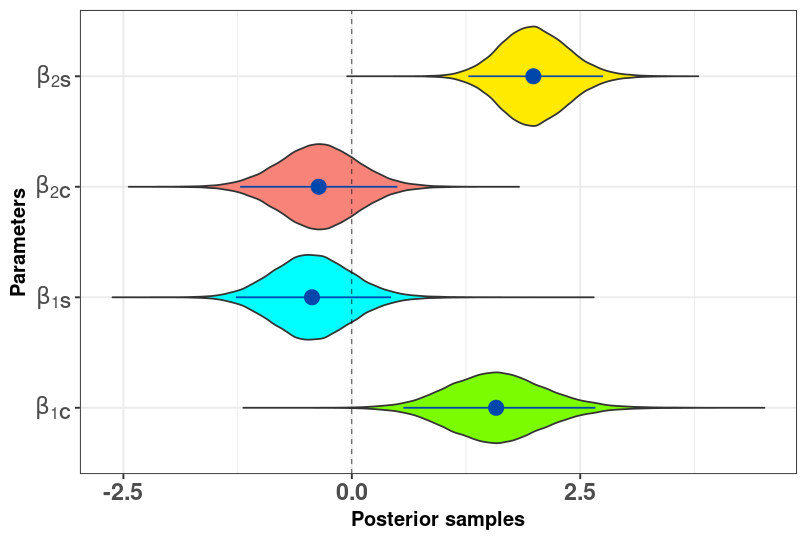}
        %\subcaption{$\theta_X$}
    \end{minipage}
    \medskip
    \begin{minipage}{0.49\textwidth}
        \centering
        \includegraphics[width=\linewidth]{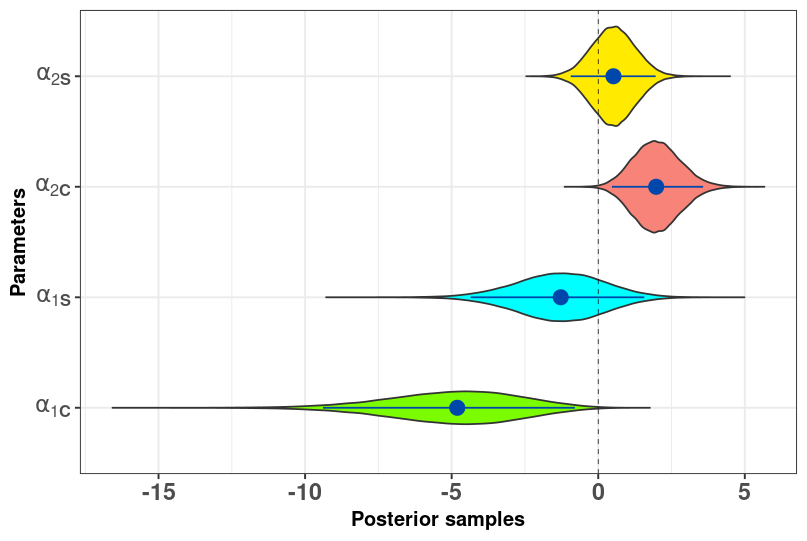}
        %\subcaption{$\theta_V$}
    \end{minipage}
    \caption{Violin plots depicting the significance of the axis of astigmatism on the 1st day post-surgery (left) and the axis of astigmatism before surgery (right), where `$\color{blue}{\bullet}$' and `$\color{blue}{\boldemdashbullet}$' indicate the median and the 95\% HPD credible interval of the posterior samples, respectively.}
    \label{FIG:LCRM:Significance-testing-thetaX-thetaV}
\end{figure}

The significance of the parameters have been checked graphically and shown in Figures \ref{FIG:LCRM:Significance-testing-plots-Stage-I} and \ref{FIG:LCRM:Significance-testing-plots-Stage-II}. These figures indicate that \textit{Surgery}, $t_1$ and $I_1$ are only significant linear covariates as their 95\% highest posterior density (HPD) credible intervals (CIs) do not contain (0,0) in Stage-I. Similarly, the intercept and the linear covariate $I_0$ both are significant in Stage-II. To test the significance of the circular covariate, we present the violin plots with 95\% HPD CIs in Figure \ref{FIG:LCRM:Significance-testing-thetaX-thetaV}, and check whether atleast one of the CIs of $\beta_{1C}, \beta_{1S}, \beta_{2C},$ and $\beta_{2S}$ do not contain zero. It is visible that the posterior distributions of $\beta_{1C}$ and $\beta_{2S}$ exhibit greater concentration far away from zero which indicates the significance of the axis of astigmatism at day 1. Similarly, the axis of astigmatism before surgery is a significant instrumental variable. To assess how well the proposed model fits the real data, we provide a donut-plot in Figure \ref{FIG:LCRM:Donut-Plot-Goodness-of-Fit}, proposed by \cite{Jha-Biswas2017}. Here, we plot the point $[1 + \cos{(\hat{\theta}_Y - \theta_Y)}](\cos{\hat{\theta}_Y}, \sin{\hat{\theta}_Y})$ by coloured dots, where $\theta_Y$ is the observed angle and $\hat{\theta}_Y$ is the predicted angle. In this figure, solid dots represent clockwise deviations, while hollow dots represent anticlockwise deviations of the predicted value from the corresponding observed value. This figure indicates a good fit, as more number of points are situated near the circumference of the wider circle centered at $0$ with a radius of $2$.
\begin{figure}[ht!]
    \centering
    \includegraphics[width=15cm]{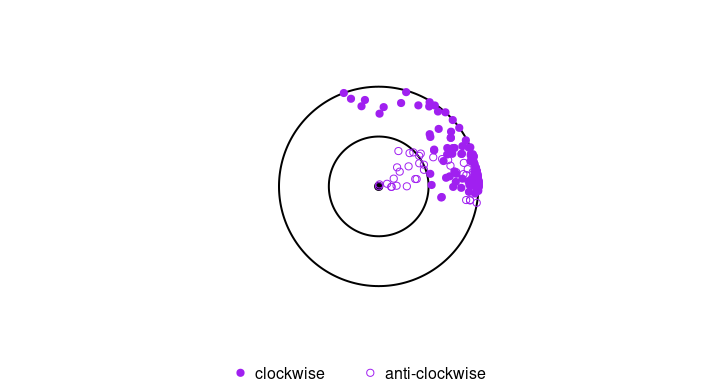}
    \caption{Donut-plot for astigmatism data.}
    \label{FIG:LCRM:Donut-Plot-Goodness-of-Fit}
\end{figure}

We refit the proposed model, discarding the insignificant covariates. We depict the recovery process of the patients after cataract surgery in Figure \ref{FIG:LCRM:Posterior-Predictive-Densities-SICS-PECS_along_time} for both surgical procedures. For comparison, the posterior predictive densities of the post-operative axis of astigmatism on the 7th, 30th, and 90th day are presented with four different initial conditions $0^\circ$ (normal), $45^\circ$ (moderate), $90^\circ$ (intermediate) and $180^\circ$ (serious), keeping $I_1 = 0.95$ which is the average value. For PECS, one can observe that the posterior predictive densities are more concentrated around $0^\circ$ irrespective of the initial conditions for almost all cases. On the other hand, the concentration of the posterior predictive densities for SICS is around $0^\circ$ only if the initial axis of astigmatism is close to $0^\circ$ on the very first day after surgery. We also observe that the variability in the posterior predictive distributions for PECS is lower compared to that of the SICS for all the  cases. This indicates greater uncertainty during the recovery process for those treated with SICS. In Figure \ref{FIG:LCRM:Posterior-Predictive-Densities-SICS-PECS_along_time}, it is visible that improvement is very slow over time and the changes are not much between consecutive occasions for SICS, when the patient's initial axis is $45^\circ$, $90^{\circ}$, or $180^\circ$. In particular, the improvement between the 7th and 30th day is marginally better for the patients with the initial condition $45^\circ$ compared to those with the initial conditions $90^\circ$ or $180^\circ$. Based on the posterior predictive densities, we have also checked that the average improvement of the axis of astigmatism on day 30 from that of the very 1st day is $8.34^\circ$, $19.48^\circ$, and $71.23^\circ$ under initial conditions $45^\circ$, $90^\circ$, and $180^\circ$, respectively. For all cases, a marginal deterioration is observed between the 30th and 90th days during the recovery process. See Table S7 in Section E of the Supplementary File for more details. This analysis can be helpful for clinicians to identify astigmatic patient whose performance is improving or deteriorating over time. We also performed the same analyses using the alternative priors given in Choice-II (see Section \ref{SEC:LCRM:Prior-Sensitivity}) and we found that the posterior predictive distributions were very similar to the ones obtained in Figure \ref{FIG:LCRM:Posterior-Predictive-Densities-SICS-PECS_along_time}. This indicates the robustness of our analysis with respect to different prior choices. 
\begin{figure}[ht!]
    \centering
    \includegraphics[width=0.95\textwidth]{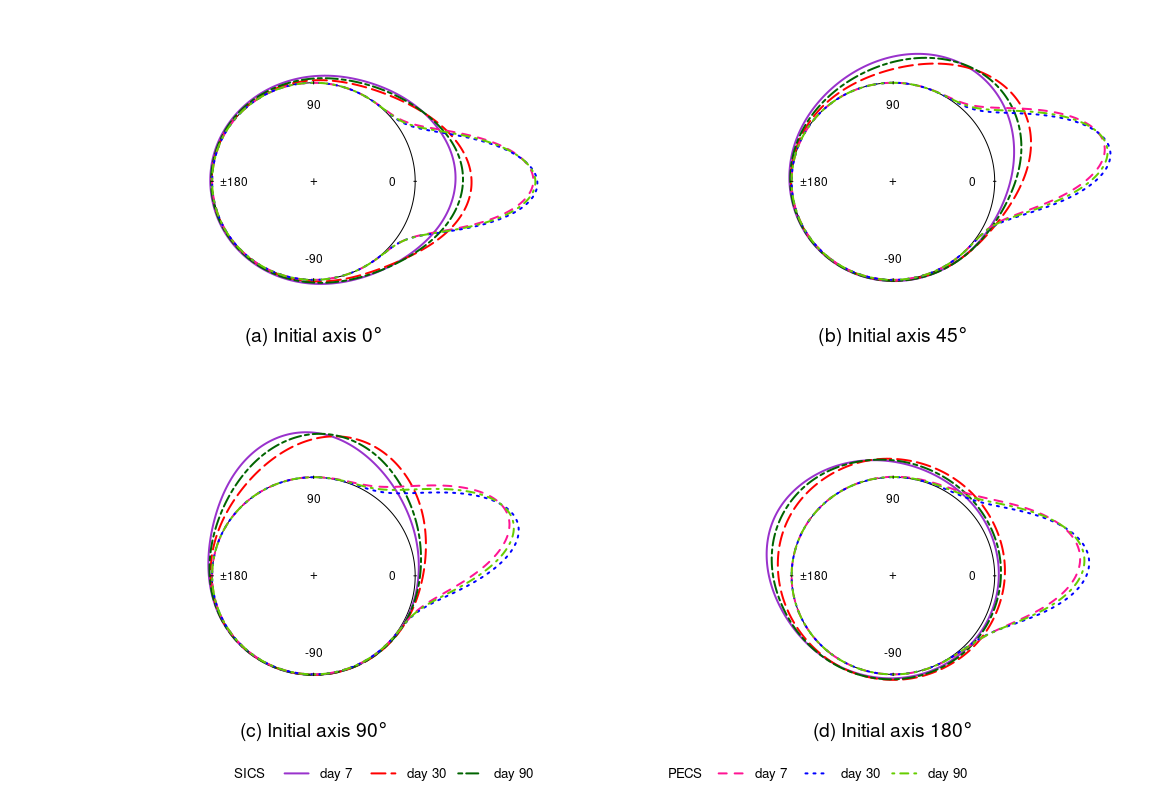}
    \caption{Comparison of the two surgeries through posterior predictive density plots based on four different initial conditions: (a) normal ($0^\circ$), (b) moderate ($45^\circ$), (c) intermediate ($90^\circ$) and (d) serious ($180^\circ$) case of the post-operative axis of astigmatism measured at the 7th, 30th, and 90th day.}
    \label{FIG:LCRM:Posterior-Predictive-Densities-SICS-PECS_along_time}
\end{figure}

\section{Discussions} \label{SEC:LCRM:Discussions}
In this paper, a longitudinal circular regression mixed-effects model has been proposed to account for the zero-inflation in both the circular response and the circular covariate. The proposed methodology is a step-forward in understanding the behaviour of the probability models for Euclidean spaces when projected on the unit sphere. Although the projected normal distributions were considered for circular data in the literature, a detailed understanding of the models was missing. This paper provides a thorough discussion in a general framework of longitudinal studies with a zero-inflated response as well as covariate. 
Interpretation of the model parameters and identifiability issues have been discussed in detail, and a Bayesian methodology has been developed for estimating the model parameters. The proposed method is applied to analyse postoperative astigmatism, which helps to understand the recovery process under different treatments. The generalization of the proposed methodology can cater to diverse scenarios arising from natural and physical phenomena. Therefore, the scope of the proposed method goes far beyond the particular case study under consideration. Although this paper does not consider missing data, one can readily extend the data augmentation strategy to sample the missing observations.

There are many applications, where the observations are longitudinal in nature, taking values on a unit sphere \citep{Wilson.etal2020}. To model such data, the proposed methodology can be extended by considering the latent space as $\mathbb{R}^d$, for $d>2$. However, this may require additional identifiability constraints, which can pose challenges in developing efficient computational algorithms for the estimation. Moreover, a projected normal distribution may not adequately model complex dependence structure among directional and/or linear variables. To address this issue, a non-Gaussian multivariate distribution on Euclidean space can be projected, where an efficient sampling algorithm is available. This is an interesting future problem from both theoretical and methodological perspectives.

\section*{Funding}
The work of Dr. Prajamitra Bhuyan is supported by the Category-I Research Project Grant (No. 3918/RP: BEMMS) from the Indian Institute of Management Calcutta, Kolkata, India. The work of Dr. Jayant Jha is partially funded by the Government of India as part of the Start-up Research Grant provided through the Science of Engineering Board of the Department of Science of Technology (No. SRG/2022/000151).

\section*{Acknowledgement}
The authors would like to thank Dr. Sourabh Bhattacharya for insightful comments and valuable suggestions, and acknowledge the support of Mr. Javed Hazarika and Mr. Sourojyoti Barick in \texttt{R} programming.

\section*{Author contributions}
All authors have contributed equally.

\section*{Additional information}
Supplementary Information: The online version contains \textbf{Supplement to ``Modeling Zero-Inflated Longitudinal Circular Data Using Bayesian Methods: Application to Ophthalmology"} available at \doi{10.13140/RG.2.2.17773.24807}.

\bibliographystyle{apalike} % unsrtnat.bst 
\bibliography{ref}

\end{document}